\documentclass[conference]{IEEEtran}
\usepackage{cite}
\usepackage{algorithm,algorithmicx}
\usepackage[noend]{algpseudocode}
\usepackage{graphics}
\usepackage{epsfig}
\usepackage{amsmath,amssymb,amsfonts}
\usepackage{amsthm}
\usepackage{graphicx}
\usepackage{textcomp}
\usepackage{xcolor}
\usepackage{flushend}
\usepackage{threeparttable}
\usepackage{verbatim}
\usepackage{url}
\usepackage{verbatim}
\usepackage{bm}
\def\BibTeX{{\rmB\kern-.05em{\sci\kern-.025emb}\kern-.08emT\kern-.1667em\lower.7ex\hbox{E}\kern-.125emX}}

\newtheorem{theorem}{Theorem}
\newtheorem{lemma}[theorem]{Lemma}

\newtheorem{claim}[theorem]{Claim}

\IEEEoverridecommandlockouts

\begin{document}

\title{CycLedger: A Scalable and Secure Parallel Protocol for Distributed Ledger via Sharding\\
\thanks{This research reported in this work was supported by the National Natural Science Foundation of China (No. 61761146005, 61632017).}
}

\author{\IEEEauthorblockN{Mengqian Zhang\IEEEauthorrefmark{1}, Jichen Li\IEEEauthorrefmark{2}, Zhaohua Chen\IEEEauthorrefmark{3}, Hongyin Chen\IEEEauthorrefmark{4} and Xiaotie Deng\IEEEauthorrefmark{5}}
\IEEEauthorblockA{\IEEEauthorrefmark{1}School of Electronic Information and Electrical Engineering, Shanghai Jiao Tong University, Shanghai, 200240, China}
\IEEEauthorblockA{\IEEEauthorrefmark{2}\IEEEauthorrefmark{3}\IEEEauthorrefmark{4}\IEEEauthorrefmark{5}School of Electronics Engineering and Computer Science, Peking University, Beijing, 100871, China}
\IEEEauthorblockA{Email: \IEEEauthorrefmark{1}mengqian@sjtu.edu.cn, \{\IEEEauthorrefmark{2}1600012970, \IEEEauthorrefmark{3}chenzhaohua, \IEEEauthorrefmark{4}1600012903, \IEEEauthorrefmark{5}xiaotie\}@pku.edu.cn}

\IEEEauthorblockA{\IEEEauthorrefmark{1}\IEEEauthorrefmark{2}\IEEEauthorrefmark{3}\IEEEauthorrefmark{4}These authors contributed equally to this work.}
}
\maketitle

\begin{abstract}
     Traditional public distributed ledgers have not been able to scale-out well and work efficiently. Sharding is deemed as a promising way to solve this problem. By partitioning all nodes into small committees and letting them work in parallel, we can significantly lower the amount of communication and computation, reduce the overhead on each node's storage, as well as enhance the throughput of the distributed ledger. Existing sharding-based protocols still suffer from several serious drawbacks. The first thing is that all non-faulty nodes must connect well with each other, which demands a huge number of communication channels in the network. Moreover, previous protocols have faced great loss in efficiency in the case where the honesty of each committee's leader is in question. At the same time, no explicit incentive is provided for nodes to actively participate in the protocol.
     
     We present CycLedger, a scalable and secure parallel protocol for distributed ledger via sharding. Our protocol selects a leader and a partial set for each committee, who are in charge of maintaining intra-shard consensus and communicating with other committees, to reduce the amortized complexity of communication, computation, and storage on all nodes. We introduce a novel semi-commitment scheme between committees and a recovery procedure to prevent the system from crashing even when leaders of committees are malicious. To add incentive for the network, we use the concept of reputation, which measures each node's trusty computing power. As nodes with a higher reputation receive more rewards, there is an encouragement for nodes with strong computing ability to work honestly to gain reputation. In this way, we strike out a new path to establish scalability, security, and incentive for the sharding-based distributed ledger.
\end{abstract}

\begin{IEEEkeywords}
    Distributed Ledger, Sharding, Scalability, Security, Reputation
\end{IEEEkeywords}

\section{Introduction}\label{Introduction}

The blockchain revolution has established a milestone with Nakamoto's Bitcoin~\cite{Bit} during the long search for a dreaming digital currency. It maintains the distributed database in a decentralized manner and has achieved a certain maturity to provide the electronic community with a service that captures the most important features of cash: secure, anonymity, easy to carry, easy to change hand, and exchangeable across a nation's boundaries. At the same time, Bitcoin realizes the tamperproof of transactions. 

Unfortunately, the transaction throughput of Bitcoin is 3 to 4 orders of magnitude away from those centralized payment-processors like Visa, which stops its popularization. Specifically, Bitcoin can process only 3.3-7 transactions per second in 2016~\cite{Croman2016} while Visa handles more than 24,000 transactions per second~\cite{Visa}. The low scalability of Bitcoin is the key factor causing this displeasing result.

Sharding is a famous technique in building the “scale-out” database by separating the whole state into multiple parts and making them work concurrently. Inspired by this idea, sharding is also used to benefit distributed ledgers in these years. When nodes are partitioned into groups, less computational power is wasted and higher processing capacity is achieved. However, most sharding-based protocols fail to maintain high efficiency when committee leaders betray~\cite{Omni}~\cite{Rapid}. They also fail to give an explicit incentive for nodes to join the protocol and behave honestly.

We present CycLedger, a fully decentralized payment-processor that provides scalability over the number of nodes and provable security. Specifically, we expect our protocol to remain robust when leaders in each committee are faulty. This is a problem that remains unsolved until now. Meanwhile, we hope there is enough encouragement for a node to participate honestly in our protocol. Furthermore, scalability is realized via sharding nodes into concurrent committees as previous works have shown~\cite{Omni}~\cite{Ela}~\cite{Rapid}. 

We assume there are no more than 1/3 malicious nodes at any time during the execution of our protocol, which is the best result achieved so far in sharding blockchain schemes~\cite{Rep}~\cite{Rapid}. Meanwhile, an adversary can corrupt any nodes as he/she likes, but it requires some time for such corruption attempts to take effect. We select its members in each committee pseudorandomly so that with overwhelming probability, there are more than 1/2 honest validators in a committee. Furthermore, we construct a partial set in each committee. Members in this set supervise the leader's behavior and act as alternates when the latter is confirmed to be vicious. By setting the size of each partial set to an appropriately large number, we ensure that only with negligible probability there are no honest nodes in a partial set. Finally, we design a recovery procedure to select a new leader when the original one is dishonest. Concretely, when a leader is found to violate the restriction of the protocol, he/she will be evicted and a node in the partial set will take his/her place. By introducing digital signature and semi-commitments between committees, we can guarantee that a non-void block can be successfully generated in each round with high probability.

To provide validators with enough incentive to enter our network, we introduce the concept of reputation to CycLedger. We hope one's reputation could be a good reflection of his/her computing power so that when we distribute the total revenue according to nodes' reputation, those honest nodes who contribute more to access transactions are paid more. To achieve this goal, we consider one's vote on a certain set of transactions. The more similar his/her vote is with the final decision, the more reputation he/she gains. In this way, as nodes with a higher computational resource can process more transactions correctly within a given time, they will obtain a higher accumulated reputation, consequently, winning more reward than other nodes.

The convenience reputation provides us is that we can simply sort out those nodes with strong computing ability.  In CycLedger, those nodes with the best reputation are selected as leaders of each round, hoping they can use their abundant computational resources to bring more transactions into a block. In this way, a promotion on throughput is expected to see in CycLedger.

This paper is organized as follows. In Section~\ref{Related}, we review previous works that are relevant to our protocol. In Section~\ref{Model}, we state the network and threat models we use, define the problems we aim to solve and give an overview of CycLedger. We elaborate on our main protocol in Section~\ref{Protocol}. After that, we give the analysis on the security and incentive of our protocol in Section~\ref{Security} and Section~\ref{Incentive}, respectively. Finally, we conclude the paper in Section~\ref{Conclusion}. 

\section{Background and Related Work}\label{Related}

\begin{table*}[!t]
    \renewcommand{\arraystretch}{1.35}
    \caption{A Comparison of CycLedger with Previous Sharding Blockchain Protocols}
    \label{Comparison}
    \centering
    \begin{threeparttable}
       \begin{tabular}{|c|c|c|c|c|}
        \hline
        & Elastico & OmniLedger & RapidChain & CycLedger \\
        \hline
        Resiliency & $t < n/4$ & $t < n/4$ & $t < n/3$ & $t < n/3$ \\
        \hline
        Complexity & $\Omega(n)$ & $O(n)$ & $O(n)$ & $O(n)$ \\
        \hline
        Storage & $O(n)$ & $O(c + \log(m))$ & $O(c)$ & $O(m^2 / n + c)$ \\
        \hline
        Fail Probability within a Round & $\Omega(me^{-c/40})$ & $O(me^{-c/40})$ & $me^{-c/12} + (1/2)^{27}$ & $m(e^{-c/12} + (1/3)^\lambda)$ \\
        \hline
        Decentralization & no always-honest party & an honest client & an honest reference committee & no always-honest party \\
        \hline
        High Efficiency w.r.t Dishonest Leaders & $\times$ & $\times$ & $\times$ & \checkmark \\
        \hline
        Incentives & $\times$ & $\times$ & $\times$ & \checkmark \\
        \hline
         Burden on Connection & heavy & heavy & heavy & light \\
        \hline
        \end{tabular}
        \begin{tablenotes}
           \footnotesize
           \item[1] $n$: total amount of nodes in the network\ $m$: amount of committees\ $c$: amount of nodes in a committee, we mention that $n = mc$.
           \item[2] $\lambda$: size of a partial set. Usually $\lambda$ is set to be no less than 40. 
           \item[3] When computing complexity, it is assumed that a cross-shard transaction be relative with all committees.
        \end{tablenotes}
    \end{threeparttable}
\end{table*}

\subsection{Sharding-based Blockchains}

In traditional blockchain protocols, all nodes in the network have to agree on a certain set of transactions. This scheme leads to a very low throughput as only a rather fixed amount of transactions are included in a block regardless of the number of nodes in the network. An alternative way is to partition nodes into parallel committees and let each committee maintain the status of a certain group of users. Under this method, the amount of transactions is proportional to the number of committees rather than a constant. This technique is known as \textit{sharding} and is considered as a good way to help blockchain scale well in literature.

RSCoin~\cite{RS} implements a centralized monetary system via sharding, however, it is not decentralized and cannot transplant to distributed ledgers. Elastico~\cite{Ela} is the first sharding-based protocol for public blockchains which can tolerate up to a fraction of 1/4 of malicious parties. Unfortunately, it has a very weak security guarantee as the randomness in each epoch of the protocol can be biased by the \textit{Adversary}. Meanwhile, Elastico's small committees (only about 100 nodes in a committee) cause a high probability to fail under a 1/4 adversary, and cannot be released in a PoW system~\cite{Is}. Specifically, when there are 16 shards, the failure probability is 97\% over only 6 epochs~\cite{Omni}. 
OmniLedger~\cite{Omni} also allows the adversary to take control of at most 25\% of the validators as well as assuming the adversary to be mildly-adaptive, nevertheless, it depends on the assumption that there is a never-absent trusty client to schedule the leaders' interaction when handling cross-shard transactions. RapidChain~\cite{Rapid} enhances the efficiency of sharding-based blockchain protocols on a large scale, but the protocol guarantees high efficiency only when leaders of each committee are honest, an unrealistic assumption in practice. Concretely, in expectation, there is a proportion of 1/3 leaders that are malicious in a round. Under this condition, cross-shard transactions may hardly be included in a block. Furthermore, the protocol does not have an explicit incentive for nodes to participate in. At the same time, all the above postulate a good connection between any pair of truthful nodes, which causes a huge burden in creating connection channels.

We give a comparison of CycLedger with previous sharding blockchain protocols on several aspects in table \ref{Comparison}.

\subsection{Distributed Randomness and Cryptographic Sortition}

It has long been a critical task in blockchain to come up with a random beacon each round (or step, slot, epoch). Algorand~\cite{Algo} makes use of Verifiable Random Functions (VRFs)~\cite{VRF}. The seed of the current round, the round number as well as the leader's secret key are required to produce the next round's seed and a proof of it. In this way, when the leader is honest, the seed is pseudorandom and unpredictable. However, the beacon can be somehow biased when the leader is malicious. Ouroboros~\cite{Ouro} takes the usage of the Publicly Verifiable Secret Sharing (PVSS) scheme~\cite{PVSS2}~\cite{PVSS1} to generate the random seed. This scheme is now known as a secure way to distributedly generate a random number and is also used in OmniLedger~\cite{Omni}. However, OmniLedger deploys RandHound~\cite{RandHound} to generate the Common Reference String (CRS). Therefore, the protocol can only allow a 1/3 proportion of malicious nodes in each committee. At the same time, a trusty client is also required in RandHound. Other protocols~\cite{Snow}~\cite{OuroPraos} use an external cryptographic hash function (CRHF), which takes an unpredictable and tamper-resistant value (\textit{e.g.}, some values contained in previous blocks) as input. The output of the function is seen as the randomness of an epoch. 

The technique of cryptographic sortition is widely used in blockchain protocols to select a small set of validators pseudorandomly. In Algorand~\cite{Algo}, a node's VRF value on a given input is the ticket to enter the committee. At the same time, VRF also provides proof so that everyone can verify if a node is selected. We mention that many protocols~\cite{Snow}~\cite{OuroPraos} now use this scheme to ensure pseudorandomness, verifiability, and unpredictability without loss of efficiency. Other protocols are utilizing different mechanisms. For example, the Sleepy Model~\cite{Sleepy} only uses pseudorandom functions while Ouroboros~\cite{Ouro} randomly generates several coins and uses the Follow-the-Satoshi (FTS) algorithm~\cite{Card} to find their owners who are deemed as leaders. However, these methods are either lack of security or not quite efficient.

\subsection{Reputation and Blockchain}

The open, anonymous and decentralized environment of Peer-to-Peer (P2P) systems brings significant advantages, such as the strong expansibility and load balancing. But in the meantime, it provides great convenience for the self-interested and malicious users to take strategic behaviors, which could lead to the inefficiency or even failure of the system. For example, current P2P file-sharing networks suffer from the rifeness of inauthentic contents~\cite{FilePoll}. \textit{Reputation}, regarded as a tool to record the action of users, has been widely used in the collaborative P2P applications to solve this problem. By designing a proper metric and corresponding algorithm, it could help the aforementioned file-sharing system effectively to identify malicious peers~\cite{EigenTrust}, avoid undesirable contents~\cite{Walsh2006} ~\cite{Damiani2002} and drive cooperation among strategic users~\cite{Feldman2004}. 

Inspired by the success of reputation in many other P2P systems, some researches try to introduce it into the area of blockchain. Nojoumian \textit{et al.}~\cite{Nojoumian2018} proposes a new framework for the PoW-based blockchain, in which each miner has a public reputation value reflecting his historical mining behavior. One's reputation influences his opportunity to participate in future minings, thus, for their long-term interests, miners are encouraged to avoid dishonest mining strategies in this model. Repchain~\cite{Rep} first introduces the reputation mechanism into the sharding-based blockchain system. Reputation is designed to characterize validators' trust and activeness, based on their decisions on the transactions list. Repchain uses the accumulated reputation values for balanced sharding, leader selections, and transaction fee distributions. It is claimed that the schema provides a high incentive for validators to work hard and honestly, as well as improving the system performance. However, building a committee with the help of a reputation reduces its randomness. That is, it indeed trades security for better incentives.

\section{Model, Problem Definition and Overview}\label{Model}

\subsection{Notation}

CycLedger works in rounds. In each round $r$, suppose there are $n$ ($n$ is changing as $r$ changes, but to simplify the notation, we use $n$ instead of $n^r$) nodes in the network and each of them has a reputation $w^r_1, w^r_2, \cdots, w^r_n$ which changes as the protocol executes. We use a Public-Key Infrastructure (PKI) to give each node a public/secret key pair ($PK, SK$). An honest-majority referee committee $C_R^r$ is selected in round $r - 1$ to manage nodes' identities, produce next round's randomness $R^{r+1}$ and propose round $r$'s block $B^r$. At the same time, all other nodes are partitioned into $m$ committees which we denote as $C_1^r, C_2^r, \cdots, C_m^r$. Note that we require each committee a trusty majority. Therefore, in our protocol, we would like the size of each committee to be $c = O(\log^2 n)$ in expectation. Readers should mention that $n = mc$. Each committee $C_i^r$, excluding the referee committee, includes a leader $l_i^r$, $\lambda$ potential leaders and $c - \lambda - 1$ members. Potential leaders in $C_i^r (1 \leq i \leq m)$ form the partial set of the committee, the latter is denoted as $C_{i, partial}^r = \{c_{i, 1}^r, c_{i, 2}^r, \cdots, c_{i, \lambda}^r\}$. It is ensured that there is at least one non-faulty node in each partial set. In practice, $\lambda$ can be an appropriate value, like 40. Fig.~\ref{fig1} demonstrates the hierarchical structure of different nodes and committees.

\begin{figure*}[!t]
\centering
\includegraphics[width=0.72\textwidth]{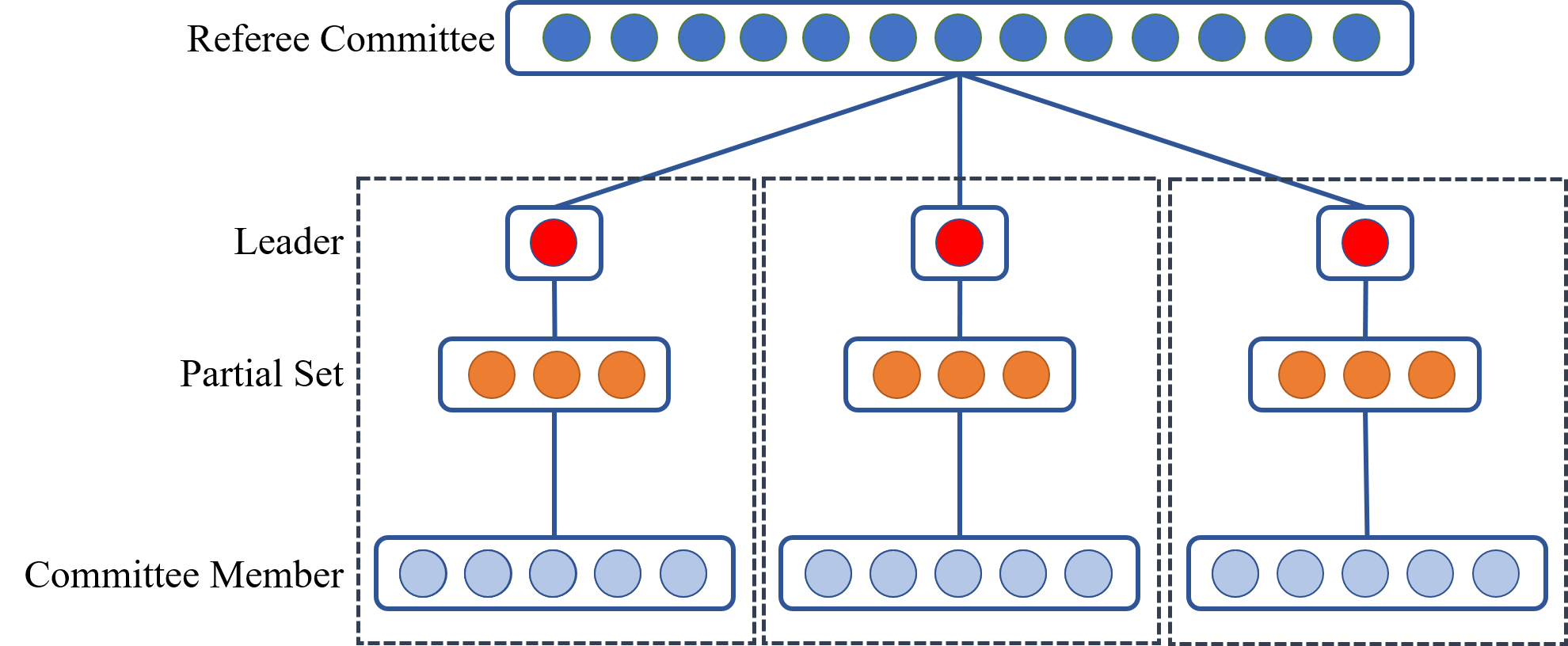}
\caption{Hierarchical structure.}
\label{fig1}
\end{figure*}
 
Besides, with overwhelming probability, each round is successfully terminated (\textit{i.e.}, all expected operations are finished) within a fixed time $T$.

\subsection{Network Model}

We assume the good connection within a committee while all leaders and partial set members are linked. Furthermore, we suppose that each leader or partial set member is connected with the whole referee committee $C_R$. This requires a far less amount of reliable connection channels than other works~\cite{Rep}~\cite{Omni}~\cite{Ela}~\cite{Rapid} in which they require a good connection among all honest nodes. Meanwhile, like existing works~\cite{Omni}~\cite{Rapid}, we assume synchronous communication within committees  (\textit{i.e.}, the delay of transporting every message is within some $\Delta$) which is realistic in real-world as a committee only consists of several hundred nodes. Meanwhile, all leaders and partial set members are synchronously linked, however, with a larger time delay of $\Gamma$. Concerning other connections, we only need to assume partially-synchronous channels~\cite{IRSDP}~\cite{PBFT}~\cite{Omni}.

\subsection{Threat Model}

As we use digital signatures in reaching consensus as RapidChain~\cite{Rapid} and RepChain~\cite{Rep} do, we may assume the existence of a probabilistic polynomial-time \textit{Adversary} which takes control of less than 1/3 part of total nodes. Corrupted nodes may collude and act out arbitrary behaviors like sending wrong messages or simply pretending to be offline. The adversary can change the order of messages sent by non-faulty nodes for the restriction given in our network model, just like in classical BFT protocols~\cite{IRSDP}. Other nodes, known as \textit{honest} nodes, always follow the protocol and do nothing exceeding the regulation. At the same time, we allow the adversary to be mildly-adaptive, which means that he/she is allowed to corrupt a set of nodes at the start of any round. Nevertheless, such corruption attempts require at least a round's time to take effect. Also, we assume all nodes in the network have access to an external random oracle $H$ which is collision-resistant as well as a Verifiable Random Function (VRF) scheme.

\subsection{Problem Definition}

 We assume that a large set of transactions are continuously sent to our network by external \textit{users}. Users are almost equally divided into $m$ shards. The status of each shard, including the users' identity and Unspent Transaction Outputs (UTXOs), is maintained by the corresponding committee. All processors have access to an authentication function $V$ to verify whether a transaction is legitimate, \textit{e.g.}, the sum of all inputs of the transaction is no less than the sum of all outputs and there is no double-spending. Our goal is similar to ~\cite{Ela}~\cite{Rapid} but slightly different. We seek for a protocol $\Pi$ which, with a given set of transactions as input, outputs a subset, $TX$, such that the following properties hold:
\begin{itemize}
    \item \textit{Security}. In each round, the protocol fails at a probability no more than $2^{-\lambda}$, where $\lambda$ is the security parameter.
    \item \textit{Validity}. Each transaction in $TX$ passes the verification, \textit{i.e.}, $\forall tx \in TX, V(tx) = True$.
    \item \textit{Scalability}. $|TX|$ grows quasi-linearly with $n$.
    \item \textit{Incentive}. Nodes are awarded to execute the protocol within the given restriction.
\end{itemize}

We mention that CycLedger differs from previous protocols ~\cite{Omni}~\cite{Ela}~\cite{Rapid} as there is an explicit incentive design in it.

\subsection{Protocol Overview}

Roughly, in each round $r$, our protocol consists of the following phases.

\begin{itemize}
    \item \textbf{Committee Configuration}. As the referee committee $C_R^r$, leaders and partial sets of round $r$ are already determined in the previous round, in this phase, all other verified nodes are grouped and committees are formed.
    \item \textbf{Semi-Commitment Exchanging}. Each leader constructs a semi-commitment of all members in his/her committee and sends it to $C_R^r$, the partial set of his/her committee and all other leaders. Loosely speaking, the semi-commitment helps to detect a malicious leader when he/she tries to cheat in the phase of inter-committee consensus.
    \item \textbf{Intra-committee Consensus}. In this phase, non-faulty nodes within a committee reach an agreement on those transactions whose inputs and outputs only relate with the shard they are responsible for.
    \item \textbf{Inter-committee Consensus}. In short, all truthful nodes agree on the validity of cross-shard transactions, whose inputs and outputs are related to multiple shards. To achieve the goal, several steps are required, which will be discussed later in detail.
    \item \textbf{Reputation Updating}. Reputation is an important indicator for each node in CycLedger. The higher a node's reputation, the more rewards he/she will get. After the above consensus phases, the reputation of all members in a committee are updated according to their votes. 
    \item \textbf{Referee Committee, Leaders and Partial Sets Selection}. We use a Proof-of-Work (PoW) process to figure out those nodes who will participate in the next round. After that, the referee committee and partial sets in round $r + 1$ are uniformly selected while leaders in the next round are selected concerning the updated reputation. At the same time, $C_R^r$ produces the randomness $R^{r + 1}$ for the next round.
    \item \textbf{Block Generation and Propagation}. After receiving all transactions, $C_R^r$ verifies and packs the legitimate ones as well as next round's randomness, next round's participants, nodes' updated reputation, and all members in $C_R^{r + 1}$, all leaders and partial set members into the block $B^r$, and releases it to all nodes. All nodes in a committee reach a consensus on the UTXOs they are in charge of after seeing $B^r$, and the leader will send them to $C_R^{r + 1}$.
\end{itemize}

    We roughly show how the protocol works when a transaction is submitted in Fig.~\ref{txToLedger}.

\begin{figure*}[!t]
        \centering
        \includegraphics[width=0.95\textwidth]{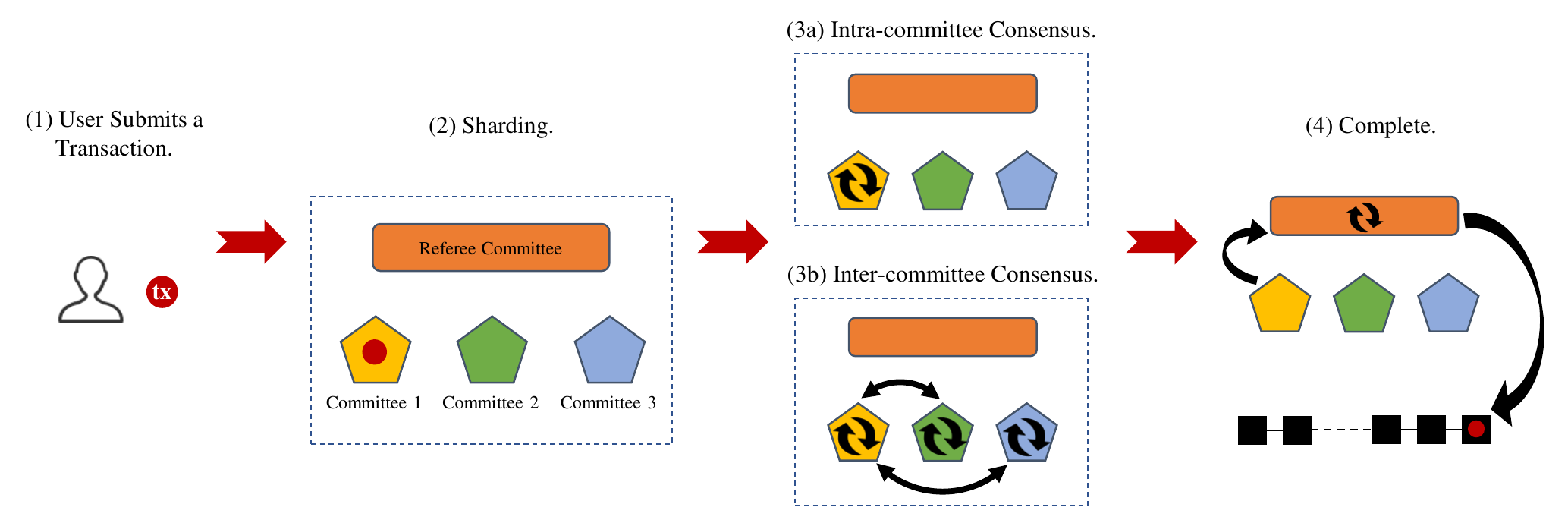}
        \caption{CycLedger architecture overview: (1) When someone submits a transaction (tx, for short), (2) the transaction will be sent to the corresponding shard, which is in the charge of committee 1. (3a) If the inputs and outputs of the transaction all belong to committee 1, they reach an intra-committee consensus on it. (3b) Otherwise, they reach an inter-committee consensus with the help of other committees. (4) After that, the verified transaction will be sent to the referee committee, who verifies and packs it into the block.}
        \label{txToLedger}
    \end{figure*}
\section{Main Protocol}\label{Protocol}

\subsection{Committee Configuration}

For simplicity, it is by default that all messages are sent authentically via the digital signature scheme throughout the protocol. We implicitly omit the signature verification in the description. Meanwhile, to avoid unnecessary tautology, a committee's leader and partial set members are referred to as \textit{key members} of the committee. Other validators in the committee are also known as \textit{common members}.

To begin with, we propose a cryptographic sortition mechanism using the Verifiable Random Function (VRF) scheme. Except for the pre-selected referee committee participants and key members, an undetermined node can find out the committee he/she belongs to via this method. Algorithm~\ref{Cryptographic Sortition} takes node's public/secret key pair $(PK, SK)$, round number $r$ and round's randomness $R^r$ as input, and returns a committee ID $id$ and a proof $\pi$ which certifies that the node belongs to $C_i^r$ in the round.

\begin{algorithm}[htb]
\caption{Cryptographic Sortition}
\label{Cryptographic Sortition}
\begin{algorithmic}[1]
\Procedure{Crypto\_Sort}{$PK, SK, r, R^r$}
    \State $<hash, \pi> \leftarrow VRF_{SK}(\mathsf{COMMON\_MEMBER}\| r \| R^r)$;
    \State $id \leftarrow hash\mod m$;\\
    \Return $(id, hash, \pi)$.
\EndProcedure
\end{algorithmic}
\end{algorithm}  

For a non-key member $i$:    
    \begin{enumerate}
        \item He/she determines his/her committee via Algorithm~\ref{Cryptographic Sortition}. 
        \item He/she sends his/her public key $PK_i$, address, VRF value $hash$ and VRF proof $\pi$ to the key members of the committee whose addresses are already shown in block $B^{r-1}$.
        \item When node $i$ receives a list from a key member, he/she delivers his/her public key, address, the hash value together with the proof to all unconnected committee members on the list.
        \item When $i$ receives a $PK_j$ from a validator $j$, first he/she checks if node $j$ is in the same committee with the given proof and forms the link if right. 
    \end{enumerate}

For a key member in committee $C_k$:
    \begin{enumerate}
        \item 
        He/she maintains a $<PK, address>$ list. Initially, the list contains the $<PK, address>$ pairs of all key members in the committee.
        \item
        When receiving a public key from node $j$, first he/she checks whether node $j$ belongs to the committee via the provided proof. If $j \in C^r_k$, he/she responds the current list back, and adds $<PK_j, address_j>$ into the list. 
    \end{enumerate}
    
Committees are formed in parallel as nodes independently execute the program shown in Algorithm~\ref{Comittee Configuration}. As a default, in all pseudocodes, the function \textit{BROADCAST} implicitly suggests that the message is multicast to all known members in the committee. We note that in expectation, each committee is consist of $c = O(\log^2 n)$ nodes. To avoid complex notations, we sometimes use $C_k$, $l_k$ and $C_R$ instead of $C_k^r$, $l_k^r$ and $C_R^r$ in algorithms when there is no ambiguity. Meanwhile, typically, we use $l$ to represent a leader, $pm$ to represent a partial set member when the specific committee is unimportant and $rm$ to represent a referee member.

\begin{algorithm}[htb]
\caption{Committee Configuration}
\label{Comittee Configuration}
\algrenewcommand\algorithmicwhile{\textbf{upon}}
\begin{algorithmic}[1]
\Procedure{Comm\_Config}{$r, B^{r - 1}$}\\
\vspace{0.35cm}
\textbf{For any key member $km$ in committee $C_k$:} 

\State $S \leftarrow \bigcup_{i \in \{l_k\} \cup C_{k, partial}} \{<PK_i, address_i>\}$; \vspace{0.15cm}
\State $\Psi \leftarrow \emptyset$;
\State $Q \leftarrow \mathsf{COMMON\_MEMBER} \| r \| R^r$;
\vspace{0.35cm}
\While{\textit{DELIVER}$(i\ |\ \mathsf{CONFIG}, <PK_i, address_i>, hash_i, \pi_i)$}
    \If{\textit{VRF\_VERIFY}$_{PK_i}(Q, hash_i, \pi_i) = TRUE$} 
        \State $S \leftarrow S \cup \{<PK_i, address_i>\}$;
        \State $\Psi \leftarrow \Psi \cup \{<hash_i, \pi_i>\}$;
        \State \textit{SEND}$(i\ |\ \mathsf{MEM\_LIST}, S)$;
    \EndIf
\EndWhile
\\
\vspace{0.35cm}
\textbf{For any non-key member $i$:}

\State $S \leftarrow \emptyset$;
\State $Q \leftarrow \mathsf{COMMON\_MEMBER} \| r \| R^r$;
\State $(id, hash, \pi) \leftarrow$ \textit{CRYPTO\_SORT}$(PK_i, SK_i, r, R^r)$;
\State $(l_{id}, C_{id, partial}) \leftarrow B^{r-1}$;
\State \textit{BROADCAST}$(\mathsf{CONFIG}, <PK_i, address_i>, hash, \pi)$;
\vspace{0.35cm}
\While{\textit{DELIVER}$(km\ |\ \mathsf{MEM\_LIST}, S')$}
    \State $S \leftarrow S \cup S'$;
    \State \textit{BROADCAST}$(\mathsf{MEMBER}, <PK_i, address_i>, hash, \pi)$;
\EndWhile
\vspace{0.35cm}
\While{\textit{DELIVER}$(j\ |\ \mathsf{MEMBER}, <PK_j, address_j>, hash_j, \pi_j)$}
    \If{\textit{VRF\_VERIFY}$_{PK_j}(Q, hash_j, \pi_j) = TRUE$}
        \State $S \leftarrow S \cup \{<PK_j, address_j>\}$;
    \EndIf
\EndWhile
\EndProcedure
\end{algorithmic}
\end{algorithm} 
    
\subsection{Semi-Commitment Exchanging}

This phase starts at a given time after the first phase starts. The recommended delay is $8\Delta$.

We rely on Algorithm~\ref{Consensus inside a Committee} which works efficiently to reach an agreement inside a committee. For simplicity, we omit the message verification procedure in the algorithm. A demonstration of its process is in Fig.~\ref{fig2}. Specifically, the algorithm contains three synchronous steps. In the first step, the leader multicasts the round number $r$, the original information $M$, the digest $H(M)$, and the sequence number of the message $sn$ to each group member with the tag $\mathsf{PROPOSE}$. Here, the sequence number $sn$ is unique and monotonically increasing over time. And the digest helps to mitigate the burden on the channel in later steps. When a committee member $i$ receives $M$ and $H(M)$, he/she verifies the correctness of the digest, checks the round number and the uniqueness of the sequence number, and broadcasts $<r, sn, H(M), i>$ with the tag $\mathsf{ECHO}$ as well as transmits the original $\mathsf{PROPOSE}$ message to all members in the committee. If $i$ receives the identical $\mathsf{ECHO}$ and transmitted messages from more than half validators in the committee as well as the corresponding $\mathsf{PROPOSE}$ message from the leader, he/she gossips $<r, sn, H(M), i>$ tagged by $\mathsf{CONFIRM}$ with all the authenticated $\mathsf{ECHO}$es he/she received back to the leader. If any non-faulty node notices that the leader is malicious (\textit{e.g.}, proposed different messages to different nodes), he/she informs all members of the committee immediately to stop the consensus process. A trustful partial set member then arouses a recovery procedure to evict the current leader and elect a new one. The recovery procedure will be explained in detail later. 

\begin{figure*}[!t]
\centering
\includegraphics[width=0.7\textwidth]{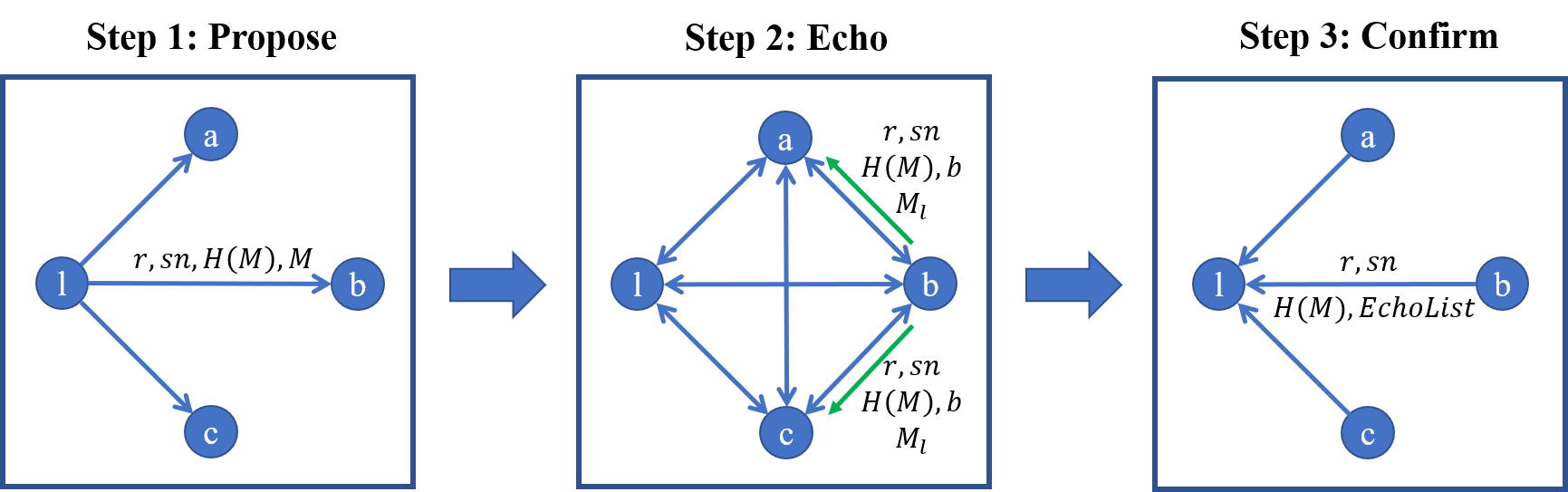}
\caption{A demonstration of inside-committee consensus.}
\label{fig2}
\end{figure*}

\begin{algorithm}[htb]
\caption{Inside-committee Consensus}
\label{Consensus inside a Committee}
\algrenewcommand\algorithmicwhile{\textbf{upon}}
\begin{algorithmic}[1]
\Procedure{Inside\_Consensus}{$r, sn, M$}\\
\vspace{0.35cm}
\textbf{For leader $l$:} 

\State $SigList \leftarrow \emptyset$;
\State \textit{BROADCAST}$(SIG_l<\mathsf{PROPOSE}, r, sn, H(M)>, M)$;
\vspace{0.35cm}
\While{\textit{DELIVER}$(i\ |\ SIG_i<\mathsf{CONFIRM}, r, sn, H(M), i>, EchoList_i)$}
    \State $M_{\mathsf{c}, i} \leftarrow SIG_i<\mathsf{CONFIRM}, r, sn, H(M), i>$;
    \State $v \leftarrow v + 1$;
    \State $SigList \leftarrow SigList \cup \{M_{\mathsf{c}, i}\}$;
    \If{$v > C / 2$}\Comment{$C :=$ the committee size.}\\
       \Return $SigList$;
    \EndIf
\EndWhile
\\
\vspace{0.35cm}
\textbf{For any member $i$ (including leader $l$):}
\State $EchoList \leftarrow \emptyset$;
\State $v \leftarrow 0$;
\vspace{0.35cm}
\While{\textit{DELIVER}$(l\ |\ SIG_l<\mathsf{PROPOSE}, r, sn, H(M)>, M)$}
    \State $M_{\mathsf{p}, l} \leftarrow SIG_l<\mathsf{PROPOSE}, r, sn, H(M)>$;
    \State \textit{BROADCAST}$(SIG_i<\mathsf{ECHO}, r, sn, H(M), i>, M_{\mathsf{p}, l})$;
\EndWhile
\vspace{0.35cm}
\While{\textit{DELIVER}$(j\ |\ SIG_j<\mathsf{ECHO}, r, sn, H(M), i>, M_{\mathsf{p}, l})$}
    \State $M_{\mathsf{e}, j} \leftarrow SIG_j<\mathsf{ECHO}, r, sn, H(M), j>$;
    \State $v \leftarrow v + 1$;
    \State $EchoList \leftarrow EchoList \cup \{M_{\mathsf{e}, j}\}$;
    \If{$v > C / 2$}
        \State $M_{\mathsf{c}, i} \leftarrow SIG_i<\mathsf{CONFIRM}, r, sn, H(M), i>$;
        \State \textit{SEND}$(l\ |\ M_{\mathsf{c}, i}, EchoList)$;
    \EndIf
\EndWhile
\EndProcedure
\end{algorithmic}
\end{algorithm}

Back to the phase, we only require the computational-binding property of a commitment scheme here. That is where the name "semi-commitment" comes from. The semi-commitment exchanging phase runs as follows. 
\begin{enumerate}
    \item 
    To start, each committee’s leader $l_k^r$ should unite the member list $S = \{PK^r_{k,1}, PK^r_{k,2}, \cdots\}$ from all key members in the committee, and compute the committee’s semi-commitment via the external hash function $H$:
    $$ SEMI\_COM_k^r = H(S). $$
    
    Then he/she multicasts it together with the member list to everyone in $C_R^r$. To prevent any means of cheating, he/she also delivers the latter with the belongingness certificate to the partial set $C^r_{k, partial}$.
    \item
    After every participant in $C_R^r$ receives semi-commitments from all committees, all trusty nodes in $C_R^r$ apply Algorithm~\ref{Consensus inside a Committee} to check i) all members in any list are registered; ii) all semi-commitments are valid. (To execute the algorithm, each node in $C_R^r$ is regarded as the leader.) They transmit the set of valid semi-commitments to all key members and expel the cheating leaders afterward.
    \item
    When a partial set member $c^r_{k, i}$ gets the semi-commitment $SEMI\_COM_k^r$ from $C_R^r$, he/she verifies whether his/her committee’s semi-commitment corresponds with the member list $S$ he/she receives from the leader. The list $S$ should be no smaller than the set he/she locally maintains. Meanwhile, the certificate should be valid. Once a truthful partial set member notices any mismatch, he/she may invoke the recovery procedure to evict the current leader.
\end{enumerate} 

We show the phase in Algorithm~\ref{Semi-Commitment Exchange}, however, the verifying process executed by partial set members is omitted.

\begin{algorithm}[htb]
\caption{Semi-commitment Exchange}
\label{Semi-Commitment Exchange}
\algrenewcommand\algorithmicwhile{\textbf{upon}}
\begin{algorithmic}[1]
\Procedure{Com\_Exchange}{$r, S, \Psi$}\\
\vspace{0.35cm}
\textbf{For leader $l_k$:} 

\State $SEMI\_COM \leftarrow H(S)$;
\State $ComList \leftarrow \bm{\vec{0}}$;
\State $ConfList \leftarrow \bm{0}$;
\For{$rm \in C_R$}
    \State \textit{SEND}$(rm\ |\ SIG_{l_k}<\mathsf{SEMI\_COM}, SEMI\_COM, r, S, k>)$;
\EndFor
\For{$pm \in C_{k, partial}$}
    \State \textit{SEND}$(pm\ |\ SIG_{l_k}<\mathsf{SEMI\_COM}, SEMI\_COM, r, S, \Psi>)$;
\EndFor
\vspace{0.35cm}
\While{\textit{DELIVER}$(rm\ |\ SIG_{rm}<\mathsf{SEMI\_COM}, SEMI\_COM_j, r, j, rm>)$}
    \State $ConfList[j][SEMI\_COM_j] \leftarrow ConfList[j][SEMI\_COM_j] + 1$;
    \If{$ConfList[j][SEMI\_COM_j] > |C_R| / 2$}
        \State $ComList[j] \leftarrow SEMI\_COM_j$;
    \EndIf
\EndWhile
\\
\vspace{0.35cm}
\textbf{For any referee member $rm$:}

\While{\textit{DELIVER}$(SIG_{l_k}<\mathsf{SEMI\_COM}, SEMI\_COM_k, r, S, k>)$}
    \State $SigList \leftarrow $\textit{Inside\_Consensus}$(r, <k, l_k>, <\mathsf{SEMI\_COM}, SEMI\_COM_k, k>)$
    \For{each leader $l$}
        \State \textit{SEND}$(l\ |\ SIG_{rm}<\mathsf{SEMI\_COM}, SEMI\_COM_k, r, k, rm>, SigList)$;
    \EndFor
\EndWhile
\EndProcedure
\end{algorithmic}
\end{algorithm} 

\subsection{Intra-committee Consensus}

The phase is a straight application of Algorithm~\ref{Consensus inside a Committee}.
\begin{enumerate}
    \item 
    In the beginning, after receiving some constant amount of transactions from external users, each leader $l^r_k$ creates a $TXList$ whose inputs are all within the shard they are in charge of. 
    \item The leader $l^r_k$ broadcasts his/her $TXList$ to everyone in the committee.
    \item
    After receiving $TXList$, every member votes listed transactions with $Yes$, $No$ or $Unknown$. When an honest node fails to judge a transaction within the given time, he/she should vote $Unknown$. Afterwhile, he/she forwards the voting list back to the leader.
    \item
    After the leader obtains all voting lists,  he/she picks up the set of transactions marked with a majority of $Yes$, which is named as $TXdecSET$.(Note that the collecting process should be within a certain time, \textit{e.g.} $6\Delta$ to avoid malicious nodes from indefinitely delaying. Those nodes who fail to reply in the period are deemed as voting $Unknown$ on all transactions. In the description of Algorithm~\ref{Intra-committee Consensus}, we tacitly approve that all members reply within the time limit.) Then he/she bales everyone's vote in a set named $VList$. The leader runs Algorithm~\ref{Consensus inside a Committee} to reach consensus on both $TXdecSET$ and $VList$. 
    \item
    Finally, the leader sends $TXdecSET$ with at least half members' certification to $C_R^r$.
\end{enumerate}

Readers can refer to Algortihm~\ref{Intra-committee Consensus} for the execution of the phase. 
\begin{algorithm}[htb]
\caption{Intra-committee Consensus}
\label{Intra-committee Consensus}
\algrenewcommand\algorithmicwhile{\textbf{upon}}
\begin{algorithmic}[1]
\Procedure{Intra\_Consensus}{$r, TXList$}\\
\vspace{0.35cm}
\textbf{For leader $l$:} 
\State $TXdecSET \leftarrow \emptyset$;
\State $VList \leftarrow \bm{\vec{0}}$;
\State $TXSigNum \leftarrow \bm{\vec{0}}$;
\State $ReplyNum \leftarrow 0$;
\State \textit{BROADCAST}$(\mathsf{TX\_LIST}, r, SIG_l<TXList>)$;
\vspace{0.35cm}
\While{\textit{DELIVER}$(i\ |\ \mathsf{VOTE}, r, SIG_i<VList_i>)$}
    \State $ReplyNum \leftarrow ReplyNum + 1$;
    \State $VList[i] \leftarrow VList_i$;    
    \For{$tx \in TXList$}
        \If{$VList_i[tx] = Yes$}
            \State $TXSigNum[tx] \leftarrow TXSigNum[tx]+1$;
            \If{$TXSigNum[tx] > C / 2$}\Comment{$C :=$ the committee size.}\\
                \State $TXdecSET \leftarrow TXdecSET \cup \{tx\}$;
            \EndIf
        \EndIf
    \EndFor
\EndWhile    
\vspace{0.35cm}
\While{$ReplyNum = C$}
    \State $SigList \leftarrow$ \textit{Inside\_Consensus}$(r, <l, \mathsf{INTRA}>, <TXdecSET, VList>)$;
    \For{each $rm \in C_R$}
        \State \textit{SEND}$(rm\ |\ SIG_l<\mathsf{INTRA}, r, TXdecSET, VList>)$;
    \EndFor
\EndWhile
\\
\vspace{0.35cm}
\textbf{For any member $i$ (including leader $l$):}

\While{\textit{DELIVER}$(l\ |\ \mathsf{TX\_LIST}, r, SIG_l<TXList>)$}
    \State $VList_i \leftarrow Vote(TXList)$;
    \State \textit{SEND}$(l\ |\ \mathsf{VOTE}, r, SIG_i<VList_i>)$;
\EndWhile

\EndProcedure
\end{algorithmic}
\end{algorithm}

\subsection{Inter-committee Consensus}

We use the semi-commitment scheme introduced before to ensure the security of cross-shard communication.

Consider those transactions whose inputs and outputs scatter across two shards, which are maintained respectively by $C_i^r$ and $C_j^r$. First of all, $l_i^r$ should reach a consensus on such transactions within $C_i^r$ by Algorithm~\ref{Consensus inside a Committee}. We call this list $TXList_{i, j}$. Then, the leader sends the consensus on $TXList_{i, j}$ as well as the member list to $l_j^r$ and $C_{j, partial}^r$. Note that in this process, a faulty leader cannot fabricate a consensus result concerning the semi-commitment.

Resembling the last phase, $C_j^r$ will reach an agreement on $TXList_{i, j}$. $l_j^r$ then sends the consensus result back to $l_i^r$.

\subsection{Reputation Updating}

Once reaching the agreement on a voting of a $TXList$ (either an inner-committee list or a cross-committee one), the leader grades every group member. As previously described, $VList$ contains $c$ marked votes, \textit{i.e.}, all members' opinion on the validity of listed transactions. On the vote, someone marks $Yes$ for those transactions he/she agrees, $No$ for disagreed and $Unknown$ for the left. Facing the votes of all members, the leader scores each member according to the proximity between his/her opinion and the final decision.

Let $+1$, $-1$ and $0$ represent $Yes$, $No$ and $Unknown$ respectively. Suppose the amount of transactions to be determined is $D$. Then there is a $D$-dimensional vector space and each transaction represents an axe of the space. From this aspect, a vote is seen as a vector in this space. Let $\vec{v_i}=\{v_{i,k}|k=1,2, \cdots, D\}$ denote the vote of member $i$, where $v_{i,k}$ is the member $i$'s opinion on the $k^{th}$ transaction. We use the cosine of the angle between two vectors to measure the proximity of two corresponding opinions. In other words, we define the $score$ of one node as the cosine similarity between its voting vector and the resulting vector, which is determined by the majority algorithm and denoted by $\vec{u}=\{u_{k}|k=1,2, \cdots, D\}$. Let $s_{i}$ denote the member $i$'s score. We have
\begin{equation}
s_{i} = cos(\vec{v_i},\vec{u}) 
= \frac{\sum_{k=1}^{D}v_{i,k} u_{k}}{\sqrt{\sum_{k=1}^{D}v_{i,k}^2}\sqrt{\sum_{k=1}^{D}u_{k}^2}} \in [-1,1].    \label{score}
\end{equation}

Owing to the random generation of each committee and the design of Algorithm~\ref{Consensus inside a Committee}, the majority result reflects truthful members' views.  Here, the main idea is, the closer one's opinion stands with the consensus, the higher the grade he/she will get. Specifically, if a member has the same answers with the consentaneous results, he/she would be rewarded the highest score, \textit{i.e.}, $+1$. On the contrary, if a member replies with completely opposing opinions, he/she will face a loss of $-1$ in reputation.

After calculating all scores, the leader assembles them into a $ScoreList = \{s_{1}, s_{2}, \cdots, s_{c}\}$. Then he/she broadcasts $ScoreList$ and the original $VList$ to all members, waiting for the consensus by Algorithm~\ref{Consensus inside a Committee}. In this process, each non-faulty member should sign on the $ScoreList$. If successful, the leader sends the agreement to $C_R^r$, together with relevant certification. Then $C_R^r$ updates their reputation by simply adding the listed score.
    
\subsection{Referee Committee, Leaders and Partial Sets Selection}

Participants in $C_R^r$ distributedly generate next round's seed $R^{r+1}$ via a random beacon generator. Here, the SCRAPE~\cite{PVSS2} scheme is preferred as it guarantees the pseudorandomness and unbiasedness of the seed even when the adversary takes control of almost half nodes. The nodes who want to participate in the next round need to solve a PoW puzzle in advance. The difficulty of the puzzle is appropriate and equal to everyone. Upon solving, a node is supposed to deliver the solution to the referee committee $C_R^r$ and the latter will record his/her identity. As a result, by the end of the round $r$, $C_R^r$ is aware of all next round's participants $P^{r+1}$. $C_R^r$ chooses $m$ nodes with the highest reputation as new leaders in round $r+1$. Afterwhile, based on randomness $R^r$, $C_R^r$ determines the next referee committee $C_R^{r+1}$ and partial sets in the next round $\{C_{1, partial}^{r+1},\cdots,C_{m, partial}^{r+1}\}$. For example, a member in $C_R^r$ can see if the following inequality holds for a node $i$:
    $$ H(r+1 \| R^r \| PK_i \| role) \leq d^r(role) $$
where $d(\cdot)$ is a difficulty function and $role$ is an optional string including $\mathsf{REFEREE\_COMMITTEE\_MEMBER}$ or $\mathsf{PARTIAL\_SET\_MEMBER}$. A new $d^r(role)$ for either $role$ can be proposed every several rounds as the number of nodes in the network changes. If a node is selected as a partial set member, a member in $C_R^r$ can calculate $H(r+1 \| R^r \| PK_i \| \mathsf{PARTIAL\_SET\_MEMBER}) \mod m$ to determine the committee he/she belongs to.
    
\subsection{Block Generation and Propagation}

Each time $C_R^r$ members receive a $TXdecSET$ from a leader, they check the validity of its signature. By the end of the round, the referee committee comes to an agreement using Algorithm~\ref{Consensus inside a Committee} on the set of valid $TXdecSET$s and pack them up, together with all participants of next round $S^{r+1}$, their reputations $W^{r+1}$, the elected referee committee $C_R^{r+1}$, leaders and partial sets as a \textit{block} $B^r$. By releasing it to the whole network, all nodes could obtain the contained information. After viewing the proposed block $B^r$, each committee member traverses all transactions packed in it. In this process, members delete the used ones from their local \textit{UTXO Lists} and append the newly generated outputs that they are responsible for. Meanwhile, limited by the package size or other reasons, there may be some unpacked valid transactions within each committee, which form a \textit{Remaining TX List}. The leader of each committee runs Algorithm~\ref{Consensus inside a Committee} to reach a consensus on the final \textit{UTXOs List} and \textit{Remaining TX List}, and sends them to $C_R^r$. Next, $C_R^r$ binds these lists with the \textit{Committee ID} and forwards them to the corresponding new partial sets. Up to this point, all committees have finished their tasks.
    
Afterward, a node obtains part of the transaction fee based on his/her reputation. The sum of all nodes' revenue equals the fee of all transactions admitted in this round. Considering that one's reputation may be negative, we first map the reputation to a positive number using a monotone function $g(\cdot)$. Then rewards are distributed proportionally to the mapped value. Specifically, the function $g(\cdot)$ is designed as follows. The reputation of $C_R^r$ will be updated by $C_R^{r+1}$ in the next round.
    
    \begin{equation}
        g(x) = \left\{
            \begin{array}{rcl}
                e^x,       && {x \leq 0;}\\
                1 + \ln(x+1),       && {x > 0.}
            \end{array} 
        \right.
        \label{eq:monotone}
    \end{equation}     
    
    \begin{figure}[!t]
        \centering
        \includegraphics[width=0.4\textwidth]{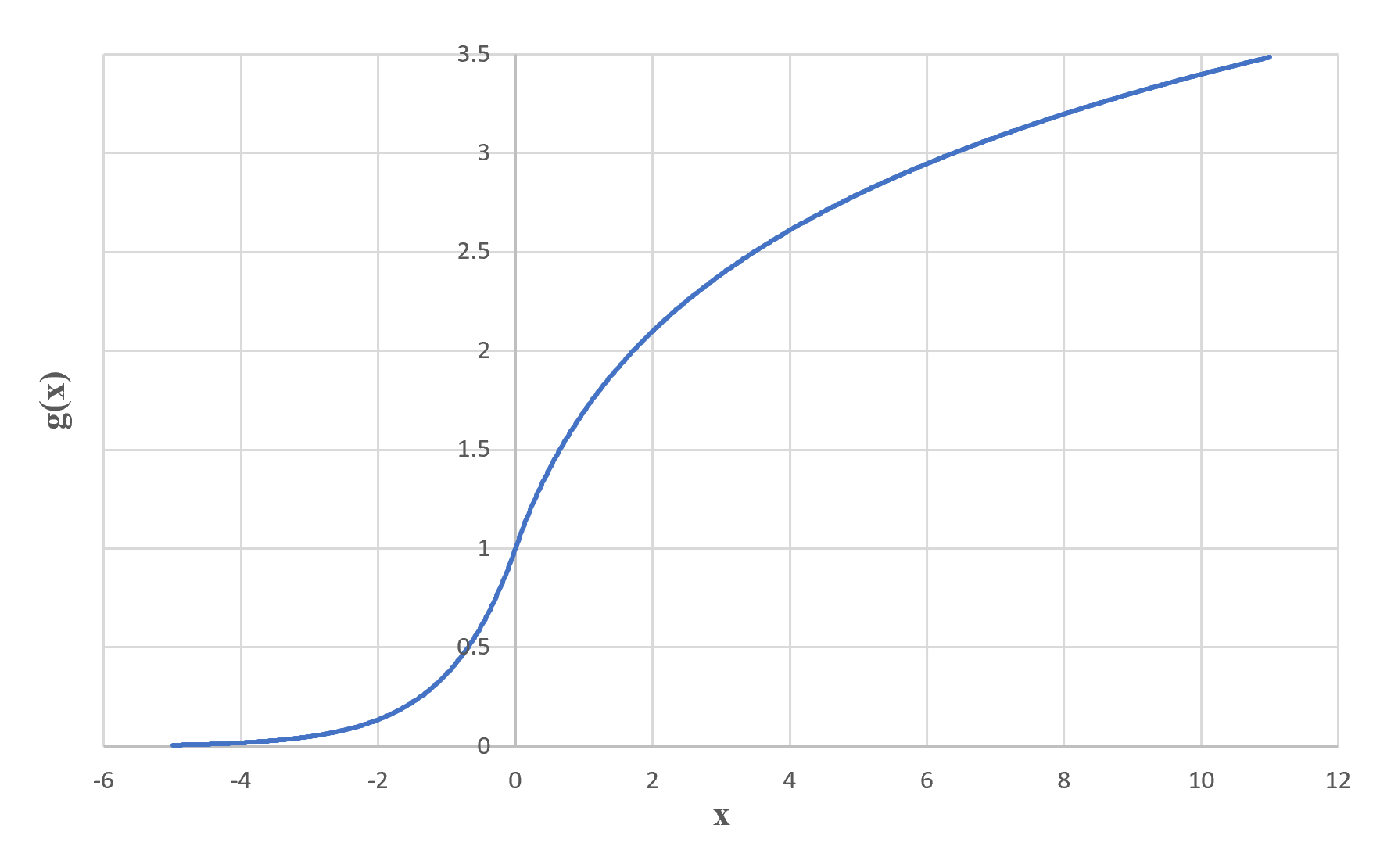}
        \caption{The monotone function $g(x)$ mapping reputation to a positive number.}
        \label{fig:monotone}
    \end{figure}
    
The monotone increasing function and proportional distribution ensure that whoever works more gets more. According to \eqref{eq:monotone}, $g(0)=1$. Thus, nodes whose reputation is zero (\textit{e.g.}, nodes who always vote \textit{Unknown}) could still get little rewards. By contrast, the negative reputation is mapped to near zero, which means the corresponding nodes can hardly obtain revenue. Under this reward mechanism, it is better to do nothing rather than do something bad, thus encouraging the malicious nodes to do right.
\section{Security Analysis}\label{Security}
In this section, we provide a comprehensive discussion on the security of our protocol, showing that CycLedger is highly secure with overwhelming probability.

\subsection{Security on Randomness}

In CycLedger, we apply the SCRAPE scheme~\cite{PVSS2} within $C_R$ to distributedly generate a random string. SCRAPE guarantees that as long as the majority of nodes in $C_R$ are honest, the output random string is pseudorandom and unpredictable. At the same time, no leader is required in the execution of SCRAPE. This feature suits the construction of the referee committee well as $C_R$ is the only committee without a leader. As one can see in the following analysis, each committee, including $C_R$, has more than half of non-faulty nodes with high probability, hence, we assert that the randomness produced by SCRAPE with $C_R$ is reliable.

\subsection{Security on Committee Configuration}

\label{Security on Committee Configuration}

We say a committee is secure when more than half of nodes are non-faulty. Recall that committees are formed uniformly except leaders. Let $X$ denote the number of malicious nodes in a committee, and $c$ be the expected committee size. We consider the tail bound of hypergeometric distribution which gives the following result:

\begin{equation}
    \Pr[X \geq \frac{c}{2}] = \sum_{x = \frac{c}{2}}^c \frac{\binom{t}{x}\binom{n - t}{c - x}}{\binom{n}{c}} \leq e^{-D(\frac{1}{2}||f)c},
\end{equation}
where $D(\cdot||\cdot)$ is the Kullback-Leibler divergence. Here $t < \frac{n}{3}$ and $f < \frac{1}{3} + \frac{1}{c}$, thus,

\begin{equation}
    \Pr[X \geq \frac{c}{2}] \leq e^{-\frac{c}{12}}.
    \label{eq:committee_error}
\end{equation}

When the expected committee size is $c = O(\log^2n)$, we derive that the probability that a committee is insecure is less than $n^{\frac{-\log n}{12}}$, which is negligible of $n$.

Fig.~\ref{fig_pro_fail} visualizes \eqref{eq:committee_error}. Namely, it shows the probability of failure calculated using the hypergeometric distribution to uniformly sample a committee when the population of the whole network is 2000. The amount of malicious nodes is 666, exactly less than one-third of the size of the network.

\begin{figure}[ht]
    \centering
    \includegraphics[width=0.4\textwidth]{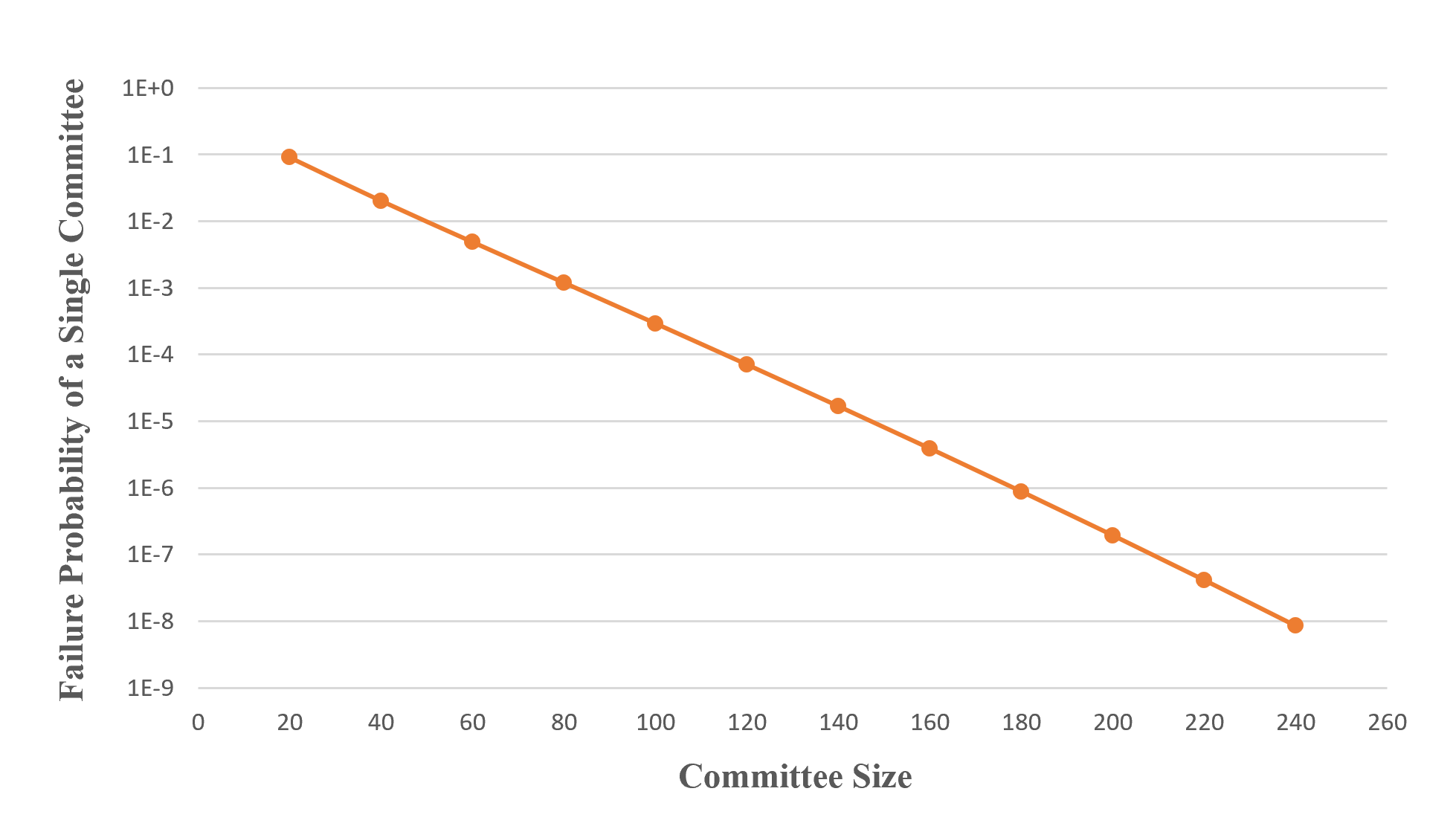}
    \caption{Probability of failure in sampling one committee from a population of 2000 nodes in CycLedger. The amount of malicious nodes is set to 666.}
    \label{fig_pro_fail}
\end{figure}

Particularly, when $c = 240$, the error probability for a single committee is less than $2.1 \times 10^{-9}$. Applying union bound, when $m$ is less than 20, the error probability is no more than $5 \times 10 ^{-8}$.

\subsection{Security on Partial Sets}

We say a partial set is secure when at least one node in the set is honest. As no more than 1/3 validators are faulty, when the size of the partial set is set to 40, the probability that a partial set is insecure at most:
    $$ (\frac{1}{3}) ^ {40} < 8 \times 10^{-20}. $$

Associated with union bound, when the number of committees is 20, the probability that at least one partial set is insecure is no more than $2 \times 10^{-18}$.

\subsection{Security on Semi-Commitments}

Recall that we use the hash of the member list as a semi-commitment of a committee. To start with, we show that $SEMI\_COM_k^r = H(S)$ satisfies the computational binding property, where $S$ is the member list.
\begin{lemma}
When $H$ is modeled as a collision-resistant hash function (CRHF), $SEMI\_COM_k^r$ satisfies the computational binding property, \textit{i.e.}, once the semi-commitment is released,  a probabilistic polynomial-time malicious leader cannot forge another member list which corresponds to the same semi-commitment with non-negligible probability.
    \label{claim:release_com}
\end{lemma}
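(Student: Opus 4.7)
The plan is to prove this by a direct reduction to the collision resistance of $H$. Concretely, I would assume for contradiction that there exists a probabilistic polynomial-time adversary $\mathcal{A}$ (a malicious leader) who, after releasing $SEMI\_COM_k^r = H(S)$, can produce a distinct member list $S' \neq S$ with $H(S') = H(S)$ with non-negligible probability $\varepsilon(n)$. I would then build a PPT algorithm $\mathcal{B}$ that uses $\mathcal{A}$ as a black box to output a collision of $H$.

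The reduction proceeds as follows. First $\mathcal{B}$ simulates the environment of the semi-commitment exchanging phase for $\mathcal{A}$: it plays the roles of the honest common members, the partial set, and the referee committee, generating keys and messages exactly as the protocol prescribes. This is straightforward because all quantities needed (public keys, addresses, VRF proofs) are either publicly derivable or can be simulated honestly; nothing in the construction of $S$ depends on a trapdoor. When $\mathcal{A}$ outputs the initial list $S$ together with $SEMI\_COM_k^r = H(S)$ and later outputs a second list $S' \neq S$ with $H(S') = SEMI\_COM_k^r$, $\mathcal{B}$ returns the pair $(S, S')$ as its collision. Since $S \neq S'$ but $H(S) = H(S')$, this is a valid collision, and $\mathcal{B}$'s success probability is at least $\varepsilon(n)$, which contradicts the CRHF assumption on $H$.

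The main step that deserves care, and which I expect to be the only nontrivial obstacle, is verifying that the reduction $\mathcal{B}$ indeed runs in polynomial time and that its simulation is perfectly indistinguishable from the real protocol execution from $\mathcal{A}$'s view; otherwise one cannot cleanly transfer $\mathcal{A}$'s advantage to $\mathcal{B}$. Because the only binding information $\mathcal{A}$ ever commits to is the hash $H(S)$, and because $\mathcal{A}$ has no control over the internal randomness of $H$ beyond choosing its input, the simulation is immediate and the reduction is tight. I would close by remarking that the argument only relies on the collision-resistance of $H$ and therefore does not give hiding; this is precisely why we call the primitive a \emph{semi}-commitment, consistent with the terminology introduced earlier in the paper.
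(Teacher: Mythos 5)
Your reduction is correct and is simply the fully spelled-out version of the paper's one-line proof, which states that the lemma "can be directly derived from the collision-resistance property of a CRHF." The approach is essentially identical; you have just made explicit the standard collision-finding reduction that the paper leaves implicit.
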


\begin{proof}
The lemma can be directly derived from the collision-resistance property of a CRHF.
\end{proof}

With the above lemma, we come up with the following theorem.
\begin{theorem}
A malicious leader cannot deceive a trustful leader by forging a member list of his/her committee as long as $C_R$ has an honest majority.
\end{theorem}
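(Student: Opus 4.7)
The plan is to reduce the statement to the computational binding property given by Lemma~\ref{claim:release_com} together with the consensus guarantee that Algorithm~\ref{Consensus inside a Committee} provides inside $C_R$. The key observation is that, with an honest majority in $C_R$, every trustful leader $l_j^r$ obtains from $C_R$ a single well-defined value $SEMI\_COM_k^r$ for each committee $C_k^r$, and this value coincides with $H(S)$ for the genuine, registered member list $S$ that the referees have already inspected. Any subsequent attempt by a malicious $l_k^r$ to pass off a different list $S'$ to $l_j^r$ therefore requires a hash collision, which a probabilistic polynomial-time adversary cannot produce except with negligible probability.

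I would organize the proof in three steps. First, I would argue that all trustful leaders share the same view of $SEMI\_COM_k^r$: in the semi-commitment exchange phase every referee runs Algorithm~\ref{Consensus inside a Committee} (treating itself as the leader) to agree on the set of valid semi-commitments and on the evicted leaders; an honest majority in $C_R$ ensures that a single agreed value is transmitted to every key member, so each trustful $l_j^r$ ends up with the same $SEMI\_COM_k^r = H(S)$ (or with the information that $l_k^r$ has already been evicted). Second, I would examine the moment when the malicious $l_k^r$ later interacts with an honest $l_j^r$ during the inter-committee consensus phase: $l_j^r$ receives both the cross-shard list and the member list claimed to be $S'$, and the protocol requires $l_j^r$ to verify $H(S') = SEMI\_COM_k^r$ before acting on it. Third, I would invoke Lemma~\ref{claim:release_com} to conclude that $S' = S$ with overwhelming probability, so no forgery succeeds.

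The main obstacle, as I see it, is not the cryptographic step but the bookkeeping required to rule out equivocation attacks. Concretely, I would need to handle the case in which $l_k^r$ sends one list to $C_R$, a different list to the partial set $C_{k, partial}^r$, and yet another list to $l_j^r$. The argument here leans on three consistency checks: the honest referees reject any semi-commitment whose list contains unregistered identities, so inconsistent submissions to $C_R$ cause $l_k^r$ to be expelled before any semi-commitment reaches $l_j^r$; the partial set cross-checks the list received from $l_k^r$ against the semi-commitment forwarded by $C_R$, invoking the recovery procedure if a mismatch is spotted; and the consensus inside $C_R$ binds what every honest leader subsequently accepts as $SEMI\_COM_k^r$. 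Once these three checks are strung together, the collision-resistance step from Lemma~\ref{claim:release_com} closes the argument in a single line.
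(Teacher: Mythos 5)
Your proposal is correct and follows essentially the same route as the paper's own proof: the paper also splits the argument into the two moments a malicious leader could lie (at semi-commitment broadcast time, where the honest majority of $C_R$ and the partial set detect any false hash, and at reveal time, where the computational-binding property of Lemma~\ref{claim:release_com} rules out a forged list). Your version merely spells out the consensus and equivocation bookkeeping that the paper leaves implicit.
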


\begin{proof}
There are only two opportunities when a betrayer can lie to a loyalist on the member list. The first chance is when he/she tries to broadcast the semi-commitment. However, because all members in $C_R$, as well as the partial set members of the committee, see the exact member list, any false hash on the list will be perceived. Thus the liar will be detected. The other chance for the malicious leader is when the semi-commitment is revealed. However, according to the Lemma~\ref{claim:release_com}, misleading behavior will not take effect in this phase.
\end{proof}

We emphasize that the hiding property of a commitment is not necessary for our protocol. For a vicious leader, even if he/she figures out the members of a committee in the semi-commitment exchanging phase, he/she can do nothing under our threat model as the adversary cannot get control of a trusty node immediately.

Here we introduce the leader re-selection procedure in CycLedger. The program is invoked when an honest partial set member notices that his/her leader is malicious or any participant of $C_R$ notice that some leader is vicious. In the semi-commitment case, the event happens when a loyal party ($C_R$ or a partial set member) discovers inconsistency between the member list he/she owns and the leader claims.

If a partial set member wants to accuse his/her leader, he/she would broadcast his/her \textit{witness} to all members in the committee and ask them to vote on the impeachment. Here, a witness is a pair of messages $W = (m_l, m_0)$ where $m_l$ should be sent and signed by the leader. We say a witness is \textit{valid} if and only if the pair can derive dishonest behaviors of the leader. \textit{E.g.}, $m_l$ be the member list that the leader sends, and $m_0$ be the semi-commitment of the committee where $m_0 \neq H(m_l)$. If the proposal is approved by more than half of the validators, the prosecutor will forward the voting result as well as his/her witness to everyone in the referee committee.

When any node in $C_R$ receives a witness $W$ and a signature list $Cert$ approving the prosecution from a partial set member $pm$ from committee $C_k$, he/she starts Algorithm~\ref{Leader Re-selection} to re-select a committee leader.

\begin{algorithm}[ht]
    \caption{Leader Re-selection}
    \label{Leader Re-selection}
    \algrenewcommand\algorithmicwhile{\textbf{upon}}
    \begin{algorithmic}[1]
    \Procedure{Re-selection}{$r, pm, SIG_{pm}<k, W, Cert, pm>$}\\
    \vspace{0.35cm}
    \textbf{For any referee member $rm$:}
    \State $SigList \leftarrow $\textit{Inside\_Consensus}$(r, <k, pm>, SIG_{pm}<k, W, Cert, pm>)$;
    \For{each $i \in C_k$}
        \State \textit{Send}$(i\ |\ SIG_{rm}<\mathsf{NEW}, pm>, SigList)$;
    \EndFor
    \EndProcedure
\end{algorithmic} 
\end{algorithm}  

Fig.~\ref{fig3} shows the above process. Afterwhile, the new leader needs to make a new semi-commitment of the committee via the semi-commitment exchanging protocol (Algorithm~\ref{Semi-Commitment Exchange}). When a participant of $C_R$ receives the new semi-commitment, he/she informs every committee leader the new semi-commitment and leader's address, so that cross-shard transaction handling may start safely.

Now we claim that the given procedure is both complete and sound:
\begin{claim}\label{claim_com}
A faulty leader is always detected and thus evicted via the leader re-selection procedure, as long as $C_R$ has an honest majority.
\end{claim}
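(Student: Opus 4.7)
My plan is to reduce the claim to three already-established facts: (i) every partial set contains at least one honest node (from the Security on Partial Sets bound), (ii) the hash-based semi-commitment is computationally binding (Lemma on the release of $SEMI\_COM_k^r$), and (iii) both each committee and $C_R$ have an honest majority with overwhelming probability. Using these, I would argue by case analysis on how a faulty leader $l_k$ can deviate, and show that any deviation produces an observable, signed witness that propagates through an honest partial set member, then through the committee, and finally through $C_R$, causing eviction via Algorithm~\ref{Leader Re-selection}.

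First, I would enumerate the deviations a malicious $l_k$ can commit that the protocol is meant to catch: (a) sending a semi-commitment $SEMI\_COM_k^r$ to $C_R$ that does not equal $H(S)$ for the list $S$ he sent to the partial set, (b) sending different member lists $S_1 \neq S_2$ to different partial-set members or to $C_R$, and (c) sending a list $S$ that omits a registered committee member whose $\mathsf{CONFIG}$ message he has already seen (detectable by $C_R$ during list validation). In each case, the honest partial-set member (guaranteed by (i)) can construct a pair $W=(m_l,m_0)$ with $m_l$ being a message signed by $l_k$ and $m_0$ contradicting it; by the unforgeability of signatures and by Lemma~\ref{claim:release_com}, $W$ is a valid witness and cannot have been fabricated by the accuser.

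Second, I would trace the witness through the re-selection procedure. The honest partial-set member broadcasts $W$ inside the committee and asks for votes on the impeachment. Since the witness is publicly verifiable, every honest committee member will sign it; by (iii) this is strictly more than half of the committee, so the prosecutor obtains a valid $Cert$. He forwards $(W, Cert)$ to every member of $C_R$. Because $C_R$ has an honest majority, Algorithm~\ref{Consensus inside a Committee} invoked inside Algorithm~\ref{Leader Re-selection} terminates with a $SigList$ endorsing eviction; the honest referees then multicast $SIG_{rm}\langle\mathsf{NEW},pm\rangle$ to every member of $C_k$, so the old leader is evicted and replaced by the designated partial-set member. Finally, I would verify that the adversary cannot block this pipeline: synchronous links between the leader, the partial set, and $C_R$ (network model) guarantee that honest messages arrive within $\Delta$ (or $\Gamma$), so the honest partial-set member is always able to see the mismatch within the time budget and to deliver the witness in time for Inside\_Consensus to succeed.

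The main obstacle I anticipate is handling subtle equivocation, rather than the counting arguments. Specifically, a clever $l_k$ might try to split the partial set or $C_R$ by sending consistent but distinct $(S,SEMI\_COM)$ pairs to different honest parties so that no single party sees an internal contradiction. I would argue this away by noting that (1) the semi-commitment is broadcast to all of $C_R$, which runs Inside\_Consensus on the received values, so any two honest referees agree on a single $SEMI\_COM$ attributed to $l_k$; and (2) this agreed value is then relayed to every partial-set member, so the honest partial-set member always obtains the same $SEMI\_COM$ that $C_R$ believes. By Lemma~\ref{claim:release_com}, $l_k$ cannot match this value with a second, differently-structured list without breaking collision resistance of $H$, so equivocation collapses into case (a) above and is detected.
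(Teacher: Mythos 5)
Your proof is correct and follows essentially the same route as the paper's: the existence of at least one honest node in every partial set guarantees that any deviation by the leader is observed and yields a witness signed by the leader himself, which then propagates through the committee vote and $C_R$'s consensus to trigger eviction via Algorithm~\ref{Leader Re-selection}. The paper's own proof is only a two-sentence sketch of this argument, so your enumeration of the possible deviations, the tracing of the witness through the pipeline, and the treatment of equivocation are consistent (and welcome) elaborations of the same idea rather than a different approach.
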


\begin{proof}
Note that there is at least one credible node in any partial set. Therefore, as a leader's action is always monitored by the partial set during the execution of the whole protocol (see further analysis in later subsections), any irregular behavior from the leader will be detected and a witness will be inevitably grasped by the non-faulty partial set member. At the same time, as the evidence is signed by the "criminal", no erroneous judgment will occur.
\end{proof}

\begin{claim}\label{claim_so}
A trustful leader will never be framed up by a faulty partial set member, as long as $C_R$ has an honest majority.
\end{claim}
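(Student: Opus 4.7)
The plan is to show two complementary barriers that protect an honest leader from being framed: the unforgeability of digital signatures, and the honest majorities inside both the accusing committee and the referee committee.

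First, I would unpack the definition of a valid witness $W=(m_l,m_0)$. By construction, $m_l$ must carry the leader's signature, and the pair must jointly exhibit deviation from the protocol (for example, two differently signed member lists with the same claimed semi-commitment, or a signed $m_l$ that is inconsistent with the leader's later signed $m_0$). Since the leader is honest, every message he/she signs is produced deterministically from the protocol transcript and is mutually consistent by construction. Hence any valid witness must contain at least one signed message the leader never produced, which, under the EUF-CMA security of the digital signature scheme, a PPT faulty partial set member can forge only with negligible probability.

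Next, I would deal with the case where the accuser broadcasts any witness regardless of validity. By the analysis in Section~\ref{Security on Committee Configuration}, the committee $C_k$ contains an honest majority with overwhelming probability. Honest members perform the same witness-validity check before signing the impeachment, so a witness that does not in fact convict the leader cannot gather more than $c/2$ signatures, and the prosecutor cannot hand $C_R$ a well-formed certificate $Cert$. Combining with the signature-unforgeability argument above, no certificate against an honest leader can be assembled except with negligible probability.

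Finally, I would close the loop at the referee committee. Algorithm~\ref{Leader Re-selection} routes the accusation through \textit{Inside\_Consensus} inside $C_R$, where every honest referee re-verifies the signatures on $Cert$ and the validity of $W$ before echoing or confirming. Because $C_R$ has an honest majority by assumption, no bogus re-selection ever crosses the inside-consensus threshold, so the honest leader is never evicted. The main obstacle I expect is mechanical rather than conceptual: one has to enumerate every phase in which the leader produces a signed message (semi-commitment exchange, intra-committee consensus, inter-committee consensus, and reputation updating) and check that an honest leader's signed transcript contains no pair that could serve as a valid witness. Since in each phase the leader's outputs are deterministic functions of its inputs, this reduces to verifying consistency of the protocol with itself, after which the signature-unforgeability argument finishes the job.
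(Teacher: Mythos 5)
Your proposal is correct and rests on the same pillar as the paper's own proof: a valid witness must contain a message signed by the leader that evidences misbehavior, and since an honest leader's signed transcript is self-consistent, a faulty partial set member would have to forge a signature, which is infeasible under the security of the signature scheme. Your additional layers (the honest majority of $C_k$ refusing to certify an invalid witness, and $C_R$ re-verifying $W$ and $Cert$ via \textit{Inside\_Consensus}) and your explicit observation that validity requires the signed pair to actually \emph{derive} dishonest behavior make your write-up a more complete rendering of the argument the paper states in one line, but it is not a different route.
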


\begin{proof}
We mention that a witness is valid if and only if the first part of it is a message signed by the leader. For the security of the digital signature scheme, a vicious partial set member cannot counterfeit a shred of evidence as the latter has to be a leader's signed message. Therefore, a loyal leader will never be unjustly accused.
\end{proof}

As proved above, the probability that more than half members in $C_R$ are faulty is negligible. Thus, we claim that our recovering procedure remains complete and sound with high probability.

\begin{figure*}[!t]
\centering
\includegraphics[width=0.8\textwidth]{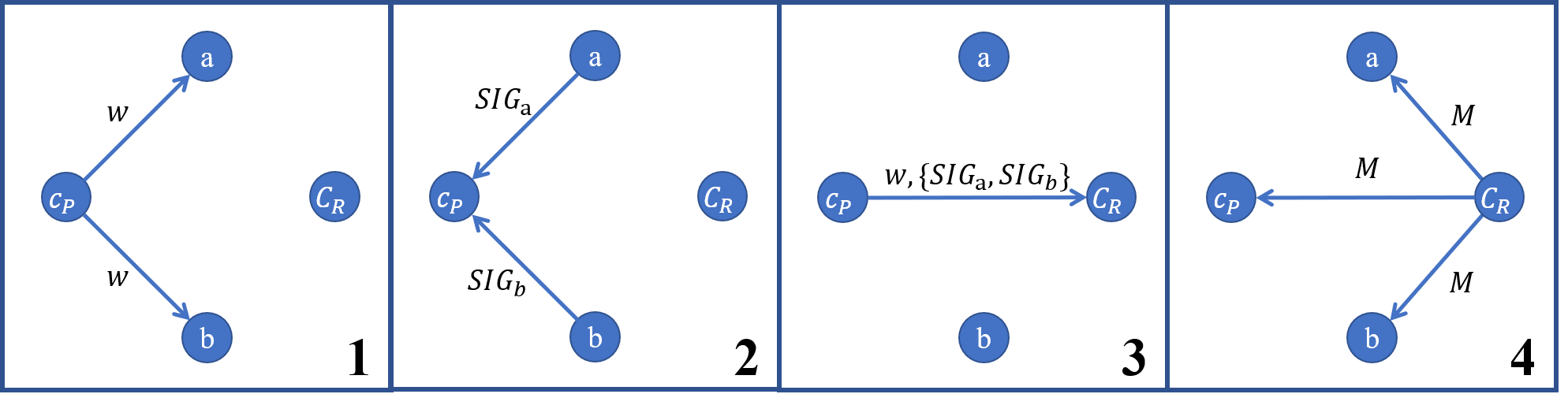}
\caption{This figure shows an example of the reporting mechanism and leader re-selection, where $C_R$ is the \textit{Referee Committee}, $c_p$ is a partial set member, $a$ and $b$ are two committee members.}
\label{fig3}
\end{figure*}

\subsection{Security on Intra-committee Consensus}

Resembling Claim~\ref{claim_com} and~\ref{claim_so}, We assert that a faulty leader is always detected and expelled in the intra-committee consensus phase.

\begin{theorem}
In Algorithm~\ref{Intra-committee Consensus}, a faulty leader can always be detected, meanwhile, malicious members can never calumniate a non-faulty leader.
\end{theorem}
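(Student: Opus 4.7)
The plan is to prove the two directions---completeness (every faulty leader is detected) and soundness (no honest leader is framed)---in parallel with Claims~\ref{claim_com} and~\ref{claim_so}, exploiting the fact that every message in Algorithm~\ref{Intra-committee Consensus} is signed and every committee's partial set contains at least one honest member with overwhelming probability. The central mechanism will again be that any deviation by the leader yields a pair $(m_l,m_0)$ with $m_l$ signed by the leader and $m_0$ contradicting it, while honest behavior by the leader produces no such contradiction that a coalition could present.

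For completeness, I would enumerate the ways in which a leader of $C_k^r$ can deviate in Algorithm~\ref{Intra-committee Consensus}: (i) broadcasting inconsistent $TXList$s to different members in the $\mathsf{TX\_LIST}$ step; (ii) assembling a $VList$ whose entries disagree with the signed votes actually received; (iii) publishing a $TXdecSET$ that is not the majority-$Yes$ selection of the published $VList$; and (iv) failing to invoke Inside\_Consensus on $\langle TXdecSET,VList\rangle$ at all. Case (i) will be exposed during the $\mathsf{ECHO}$ round of Algorithm~\ref{Consensus inside a Committee}, because an honest partial set member will receive two distinct $\mathsf{PROPOSE}$ messages both signed by the leader; case (ii) will be exposed because an honest partial set member knows his/her own signed vote and checks it against the $VList$ delivered inside $\langle TXdecSET,VList\rangle$; case (iii) will be exposed by a deterministic recomputation of the majority over $VList$; case (iv) will be exposed by a timeout. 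In every case the honest partial set member assembles a valid witness $W=(m_l,m_0)$ and triggers Algorithm~\ref{Leader Re-selection}, just as in the semi-commitment case.

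For soundness, I would reuse the argument of Claim~\ref{claim_so}: the Re-selection procedure admits a witness only when its first component is a signed leader message contradicting $m_0$. Since the digital signature scheme is existentially unforgeable and an honest leader never emits mutually inconsistent signed messages, no probabilistic polynomial-time coalition of common members can fabricate such a witness against a non-faulty leader. Combined with the earlier honest-majority guarantee on $C_R$, this shows that Re-selection never ejects a loyal leader.

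The main obstacle will be case (ii), because a colluding leader might submit a $VList$ in which only the entries of \emph{faulty} common members have been rewritten, leaving every honest entry---including the honest partial set member's own entry---intact, so that no single honest node spots an inconsistency in his own row. I plan to close this gap by observing that the vote $VList_i$ transmitted by each member $i$ in Algorithm~\ref{Intra-committee Consensus} is itself signed as $SIG_i\langle VList_i\rangle$, so any rewritten entry for a \emph{faulty} member $i^\ast$ must still be produced by $i^\ast$'s own signature; absent such a signature, the $VList$ is malformed and can itself serve as $m_l$ in a witness. Thus even this colluding strategy reduces to the same unforgeability assumption, and the detection step for case (ii) goes through without modification.
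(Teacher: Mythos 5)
Your proposal follows the route the paper intends but never writes down: the paper offers no proof of this theorem at all, only the one-line assertion that it holds ``resembling'' Claims~\ref{claim_com} and~\ref{claim_so}, i.e.\ that completeness and soundness reduce to the witness-based leader re-selection machinery. Your case enumeration (i)--(iv) and the reuse of the unforgeability argument for soundness are exactly the elaboration that assertion presupposes, so in that sense you are on the same path, just actually walking it. Two points deserve care. First, in case (i) you claim equivocation on $TXList$ is exposed ``during the $\mathsf{ECHO}$ round of Algorithm~\ref{Consensus inside a Committee},'' but in Algorithm~\ref{Intra-committee Consensus} the $\mathsf{TX\_LIST}$ message is a plain \textit{BROADCAST}, not a call to \textit{Inside\_Consensus}; there is no echo step for it. Detection of that equivocation has to come later, when the (single, echoed) $\langle TXdecSET, VList\rangle$ is inconsistent with the signed $TXList$ some honest member holds --- the two signed leader messages still form a valid witness, but you need honest members to exchange them, which the algorithm as written does not explicitly provide. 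Second, your resolution of case (ii) is slightly off target: a faulty member colluding with the leader will gladly sign any rewritten vote, so such an entry is \emph{not} malformed and is not detectable --- but it also is not an attack, since it is indistinguishable from that member simply voting that way, and the honest majority still controls which transactions clear the $> C/2$ threshold. The right observation there is harmlessness, not detectability. With those two repairs your argument is sound and strictly more informative than what the paper provides.
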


\subsection{Security on Inter-committee Consensus}

Finally, we show that cross-shard transactions are safely processed by our protocol.

\begin{lemma}\label{claim_inter_com}
A malicious leader who tries to imitate or conceal some cross-shard transactions is always detected by a trustful partial set member and thus evicted via the leader re-selection procedure. 
\end{lemma}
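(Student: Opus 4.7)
The plan is to split the adversary's strategy into two disjoint cases---\emph{imitation}, in which the leader $l_i^r$ ships to $l_j^r$ or $C_{j,partial}^r$ a $TXList_{i,j}$ containing a transaction never agreed upon inside $C_i^r$, and \emph{concealment}, in which $l_i^r$ drops a cross-shard transaction that his shard actually observed---and to reduce each to a primitive whose security has already been established earlier in the paper: the binding property of the semi-commitment (Lemma~\ref{claim:release_com} and the ensuing theorem), honest majority inside each committee, non-empty honest intersection of each partial set, and the completeness/soundness of the leader re-selection procedure (Claims~\ref{claim_com} and~\ref{claim_so}).

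For the imitation case, I would first observe that every cross-shard message the protocol deems legitimate must carry signatures from more than half of $C_i^r$ on the asserted $TXList_{i,j}$. Since $SEMI\_COM_i^r$ has already been propagated and, by binding, cannot be opened to any member list other than the one $C_{j,partial}^r$ already holds, the verifier knows exactly which public keys are legitimate signers. Honest majority in $C_i^r$ together with existential unforgeability of the signature scheme then precludes a valid majority certificate on a fabricated $TXList_{i,j}$; any forgery attempt produces a leader-signed message $m_l$ whose content is contradicted by the verifiable signer set, yielding the witness pair $(m_l, m_0)$ an honest partial set member needs in order to invoke Algorithm~\ref{Leader Re-selection}.

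For the concealment case, I would exploit the fact that $C_{i,partial}^r \subseteq C_i^r$, so partial set members of $C_i^r$ observe every cross-shard transaction arriving at shard $i$ as well as every $\mathsf{PROPOSE}$ message $l_i^r$ emits during the intra-committee phase. Any omission in the leader's proposal, or in the list he is supposed to forward across shards, therefore leaves in the hands of an honest partial set member a locally cached leader-signed message that directly contradicts an observed cross-shard input; this pair is again a valid witness that triggers Algorithm~\ref{Leader Re-selection}, and by the partial-set bound at least one member of $C_{i,partial}^r$ is honest and will file it.

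The main obstacle, and the step I would spend the most effort on, is the variant of concealment in which $l_i^r$ performs a clean intra-committee consensus on $TXList_{i,j}$ but then simply stays silent toward $l_j^r$ and $C_{j,partial}^r$: Algorithm~\ref{Leader Re-selection} requires a leader-signed witness, so pure silence has to be converted into one. My plan here is to let $C_{j,partial}^r$, after the synchrony bound $\Gamma$ elapses without receiving the expected forward, query $C_{i,partial}^r$, who then supplies the signed intra-committee $\mathsf{CONFIRM}$ list that commits $l_i^r$ to a valid cross-shard output list; pairing this signed obligation with the observed non-delivery yields the witness $(m_l, m_0)$ required by the re-selection procedure. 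Completeness then follows from Claim~\ref{claim_com} and soundness from Claim~\ref{claim_so}, and the lemma drops out.
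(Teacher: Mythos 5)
Your proof is correct and its core mechanism is the same as the paper's: an honest partial set member detects imitation or concealment by checking signatures from members of the originating committee, and unforgeability of those signatures supplies the witness needed to invoke the re-selection procedure. The difference is one of thoroughness rather than of route. The paper's own proof is a two-sentence sketch that says exactly this and stops; you explicitly split imitation from concealment, tie the identification of legitimate signers back to the binding of $SEMI\_COM_i^r$ (Lemma~\ref{claim:release_com}), and route the conclusion through Claims~\ref{claim_com} and~\ref{claim_so}, all of which is faithful elaboration. The one place you go genuinely beyond the paper is the silence variant of concealment, where $l_i^r$ completes intra-committee consensus but never forwards anything to $C_j^r$: you correctly observe that pure non-delivery does not by itself yield a leader-signed witness, and you patch this with a $\Gamma$-timeout query from $C_{j,partial}^r$ to $C_{i,partial}^r$. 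That query step is not in the protocol as written, so strictly speaking you are proving the lemma for a slightly augmented protocol; however, the paper itself deploys an analogous $2\Gamma$-timeout fallback in the proof of Lemma~\ref{claim_inter-so}, so your addition is in the spirit of the design and arguably closes a gap that the paper's one-line proof leaves open. In short: same approach, but your version is the more defensible one.
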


\begin{proof}
We prove this lemma conditioned on the fact that the messages are unforgeable. If the malicious leader imitates or conceals some cross-shard transactions, a trusty partial set member can challenge the leader’s honesty by checking signatures from members of the departing committee. 
\end{proof}

\begin{lemma}\label{claim_inter-so}
A malicious leader can never frame up a trustful leader by misleading partial set members, as long as the delay of communication between shards is $\Gamma$.
\end{lemma}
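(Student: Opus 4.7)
The plan is to reduce the statement to two ingredients already available in the excerpt: the existential unforgeability of the digital signature scheme, and the synchronicity of the inter-leader/partial-set channel with delay at most $\Gamma$. Recall from Claim~\ref{claim_so} that a witness against any leader $l$ is declared valid only if it contains a message genuinely signed by $l$. Hence, to frame an honest leader $l_i^r$, a malicious $l_j^r$ must either manufacture evidence bearing $l_i^r$'s signature that looks as if it violates the inter-committee protocol, or convince the partial set $C_{j, partial}^r$ (or $C_R^r$) that $l_i^r$ deviated by being silent or inconsistent within the synchronous window.

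First I would rule out the signature-forging route directly from the unforgeability of the signature scheme: $l_j^r$ can only forward messages that $l_i^r$ actually produced, and by the honesty of $l_i^r$ such messages conform to the inter-committee rules and therefore cannot constitute a valid witness in the sense of Claim~\ref{claim_so}. Next I would handle the silence/inconsistency route using the $\Gamma$-synchronous assumption. Whenever $l_i^r$ transmits a cross-shard list $TXList_{i,j}$, the protocol prescribes that he sends the signed message both to $l_j^r$ and to every $pm \in C_{j, partial}^r$; by synchronicity, an honest $pm$ receives the same authenticated copy within delay $\Gamma$. If $l_j^r$ later claims to $pm$ that $l_i^r$ was silent, sent a different list, or missed the deadline, the honest $pm$ can compare $l_j^r$'s claim against his own locally stored, signature-verified copy; any mismatch convicts $l_j^r$ rather than $l_i^r$, and $pm$ will simply refuse to endorse the impeachment of $l_i^r$.

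I expect the main obstacle to be making the timing side of the argument watertight, in particular excluding the ambiguous case in which $pm$ has not yet received $l_i^r$'s message but $l_j^r$ already pushes a ``silence'' accusation. The $\Gamma$-synchronous bound between key members resolves this: $pm$ is required to wait a full $\Gamma$ past the scheduled deadline before concluding silence, and during that window the genuine signed message from an honest $l_i^r$ is guaranteed to arrive. Combining unforgeability, this timing bound, and the validity criterion from Claim~\ref{claim_so}, no witness against $l_i^r$ that passes partial-set verification can ever be assembled by $l_j^r$, which proves the lemma.
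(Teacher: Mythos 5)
Your proof attacks the wrong configuration, and as a result it misses the one attack the paper's proof is actually about. In CycLedger the impeachment machinery (the witness plus Algorithm~\ref{Leader Re-selection}) is exercised by a partial set against \emph{its own} leader, so ``misleading partial set members to frame a trustful leader'' concretely means: a malicious \emph{remote} leader $l_i^r$ manipulates $C_{j,partial}^r$ into believing their own honest leader $l_j^r$ misbehaved. The paper's proof identifies the unique way to do this: $l_i^r$ sends the signed $TXList_{i,j}$ to the members of $C_{j,partial}^r$ but deliberately withholds it from $l_j^r$. Now every honest $pm \in C_{j,partial}^r$ holds an authenticated cross-shard list, observes that $l_j^r$ never initiates consensus on it, and -- by exactly the detection logic of Lemma~\ref{claim_inter_com} -- would conclude that $l_j^r$ is concealing transactions. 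Your proof never considers this selective-delivery attack; worse, your stated rule that ``any mismatch convicts $l_j^r$'' is precisely the inference that makes the framing succeed, since in this scenario there \emph{is} a mismatch between what $pm$ holds and what $l_j^r$ proposes, yet $l_j^r$ is honest.

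The missing ingredient is the paper's timeout-and-forward rule: an honest $pm$ who has received a cross-shard set from another committee's leader does not treat its leader's ensuing silence as evidence of guilt; instead, after waiting $2\Gamma$ (one $\Gamma$ for the remote leader's message to reach $l_j^r$ and one for $l_j^r$'s proposal to come back), $pm$ forwards the set to $l_j^r$ itself and the consensus proceeds normally, so no accusation is ever raised against the honest leader. The parts of your argument that are sound -- unforgeability blocking fabricated witnesses, which is really Claim~\ref{claim_so}, and the $\Gamma$-synchrony bounding when silence may be declared -- are necessary background but do not by themselves close this case. To repair the proof you should (i) reorient the roles so that the framed party is the local leader and the misleader is the remote one, and (ii) add the $2\Gamma$ wait-then-forward behavior of partial set members as the mechanism that neutralizes selective delivery.
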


\begin{proof}
Notice that if the faulty leader tries to frame up a credible leader, the only way is to sending a message to the partial set members, meanwhile do not send anything to the honest leader, misleading that the non-faulty leader hides all transactions. However, if a non-faulty partial set member does not receive transactions from his/her leader after $2\Gamma$ time since he/she receives the transactions set from another committee leader, he/she can send the transactions set to his/her leader and continues running consensus protocol.
\end{proof}

Combining the above two lemmas, we give the following theorem, which shows that our protocol guarantees security when processing cross-shard transactions.

\begin{theorem}
In the inter-committee consensus phase, a faulty leader can always be detected, in the meantime, non-faulty leaders can never be framed up.
\end{theorem}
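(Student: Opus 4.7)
The plan is to combine the two preceding lemmas, which already do the heavy lifting. The completeness direction (every faulty leader is detected) is exactly Lemma~\ref{claim_inter_com}, and the soundness direction (no trustful leader is framed up) is exactly Lemma~\ref{claim_inter-so}. The theorem is then a direct conjunction, so my proof would be only a short paragraph in length; the real content is verifying that these two lemmas together exhaust the deviations a leader can make during the inter-committee phase.

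First I would enumerate the possible misbehaviors. On the sending side, a malicious $l_i^r$ can only (i) inject cross-shard transactions that $C_i^r$ never endorsed, or (ii) suppress transactions that $C_i^r$ did endorse, before forwarding to $l_j^r$ and $C_{j, partial}^r$. Both are covered by Lemma~\ref{claim_inter_com}: because every honest partial set member in $C_i^r$ already holds the authenticated intra-committee consensus output together with members' signatures, any mismatch between what the leader forwards and what the committee endorsed produces a valid witness, and the leader re-selection procedure then evicts $l_i^r$. On the receiving side, a malicious $l_j^r$ can at best try to raise a bogus concealment complaint against a trustful $l_i^r$; this is closed off by Lemma~\ref{claim_inter-so}, since an honest partial set member in $C_j^r$ who has not received the list from $l_j^r$ within $2\Gamma$ of receiving it from $C_i^r$ simply forwards it to its own leader, preventing any synthetic appearance of concealment.

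The hard part, such as it is, will be arguing that no further attack vector slips through the cracks, most notably leader equivocation across the partial set and timing games around the $\Gamma$ bound. For equivocation, I would appeal to the fact that partial set members gossip the messages they receive and run the authenticated inside-committee consensus before acting, so any split view is detected and again yields a signed witness usable by Algorithm~\ref{Leader Re-selection}. For timing, I would note that $\Gamma$ is precisely the synchronous delay between key members, so the $2\Gamma$ window used in Lemma~\ref{claim_inter-so} is both necessary and sufficient; no honest forwarding can race the timer. With these observations pinned down, the theorem follows by conjoining the two lemmas, and the proof can be written in essentially three lines.
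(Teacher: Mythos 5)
Your proposal matches the paper's own proof, which likewise derives the theorem as a direct conjunction of Lemma~\ref{claim_inter_com} (detection of faulty leaders) and Lemma~\ref{claim_inter-so} (protection of trustful leaders). The additional case enumeration and timing discussion you include is elaboration beyond what the paper writes, but the underlying approach is the same.
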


\begin{proof}
This theorem can be derived by Lemma~\ref{claim_inter_com} and~\ref{claim_inter-so}.
\end{proof}

\section{Performance Analysis}\label{Performance}

We claim that our protocol is efficient in intra-shard and cross-shard transactions processing. 

Recall that we use $n$ to denote the number of nodes in the network, $m$ to denote the number of committees and $c$ to denote the expected amount of participants in a committee.

\subsection{Complexity of Committee Configuration}

In this phase, all members in any committee except $C_R$ will recognize each other, which imposes a communication, computation and storage complexity of $O(c)$ for all common members. For leaders and partial set members, the communication complexity is multiplied by $c$ as each of them has to deliver $i$ pieces of information to the $i^{th}$ applying participant.

\subsection{Complexity of Commitment Exchanging}

For any leader, he/she is obliged to produce the commitment of his/her committee and note down commitments from all other committees. Thus, the computation complexity is $O(c)$ while the storage complexity is $O(m)$. However, members in $C_R$ has to suffer from the huge transportation overhead of $O(m^2)$ as they have to propagate all commitments to all committees as intermediaries.

\subsection{Complexity of Reaching Intra-committee Consensus}

To reach a consensus on a set of transactions with constant size, one in a committee has to broadcast the set with his/her signature to all members in a committee, which causes a communication complexity of $O(c)$. For key members, they must store the information with at least half of the members' certification to acknowledge the set, hence, a storage complexity of $O(c)$ is induced. At the same time, common members only need to keep their own opinion in reserve. 

\subsection{Complexity of Block Generation and Propagation}

To propose a block with size $O(n)$ and broadcast it to all committees, an $O(mn)$ communicating burden and an $O(n)$ storage overhead are inevitable for any participant in $C_R$. However, we point out that the expense also exists in almost all previous protocols. At the same time, for any other attendee, the storage complexity is just $O(c)$, as he/she only needs to maintain the participants, transactions, and UTXOs within the committee he/she belongs to.

We summarize the theoretical analysis of the performance of CycLedger in Table~\ref{Complexity}.

\begin{table*}[!t]
    \renewcommand{\arraystretch}{1.35}
    \caption{Communication, Computation \& Storage Complexity of CycLedger}
    \label{Complexity}
    \centering
    \begin{threeparttable}
        \begin{tabular}{|c|c|c|c|c|}
           \hline
             Communication \& Computation / Storage Complexity & Common Members & Leaders \& Partial Set Members & $C_R$ Members \\
          \hline
             Committee Configuration & $O(c) / O(c)$ & $O(c^2) / O(c^2)$ & - \\
            \hline
            Semi-Commitment Exchanging & - & $O(c) / O(m)$ & $O(m^2) / O(m)$ \\
            \hline
            Intra-committee Consensus & $O(c) / O(1)$ & $O(c) / O(c)$ & $O(n) / O(n)$  \\
            \hline
            Inter-committee Consensus & $O(m) / O(1)$ & $O(n) / O(1)$ & $O(n) / O(n)$  \\
            \hline
            Reputation Updating & $O(c) / O(1)$ & $O(c) / O(c)$ & $O(n) / O(n)$  \\
            \hline
            Key Member Selection & - & - & $O(n) / O(n)$  \\
            \hline
            Block Generation \& Propagation & $O(m) / O(c)$ & $O(n) / O(c)$ & $O(mn) / O(n)$  \\
            \hline
        \end{tabular}
        \begin{tablenotes}
           \footnotesize
           \item[1] $n$: total amount of nodes in the network\ $m$: amount of committees\ $c$: amount of nodes in a committee, we mention again that $n = mc$.
        \end{tablenotes}
    \end{threeparttable}
\end{table*}
\section{Incentive Analysis}\label{Incentive}

Besides security, reputation is also a highlight of CycLedger. There are two main problems when introducing reputation to our protocol: what reputation reflects and how reputation works. In this section, we give a discussion on the incentive of our protocol, by analyzing these two problems. 

\subsection{Incentive on Reputation}

In general, reputation is expected to reflect the honest computational resources of each node in CycLedger.

The basic task of nodes in each committee, in short, is to give opinions on the validation of requested transactions. One's reputation is a reflection of his/her behaviors. For a newly joined node, based on his/her blank work experience, the reputation will start from zero. However, as long as he/she begins to work, the reputation matches his/her behavior. Specifically, for an honest node with more computing resources, he/she could make more correct judgments on the validation of transactions, thus, earning a higher reputation. While for the malicious nodes, reputation is closely related to his/her evilness, or in other words, the honest computational resources he/she contributes.  

One's reputation determines his/her profit in each round. Nodes with a higher reputation get more payment after a new block is successfully generated. Owing to the scoring mechanism and reward mechanism, reputation provides enough incentive for nodes to work honestly and as hard as they can. In this way, reputation is considered as a reflection of the trustworthy computational resources one node contributes.

In CycLedger, leaders have a higher workload compared with other members. With this in mind, we directly choose nodes with the highest reputation as leaders in each round, thus to enhance the performance and throughput of CycLedger even further. In return, leaders obtain some extra reputation as a bonus for their hard work, which turns into higher revenue. Therefore, leaders will have enough incentive to behave trustfully.

Certainly, a high reputation is not once and for all. Recalling the proportional design in our reward mechanism, one's revenue depends on the relative value of his/her reputation. Thus for each node, not to advance is to go back. So it appears that the best possible strategy for a node is, to use all his/her computational resources to work within rules.

\subsection{Punishment on Reputation}
We have been focusing on how the reputation motivates nodes to work hard. In this part, we will discuss what if someone, especially a leader, breaks the rules.

Intuitively, the scoring mechanism can be translated as, awarding points for right answers, deducting marks for wrong answers and doing nothing for an $Unknown$ reply. Thus when giving wrong votes, intentionally or unintentionally, the node will face the corresponding decline in reputation. That is, the scoring mechanism itself contains the punishment on reputation.

Moreover, a leader who violates the protocol faces a more serious penalty. Once a leader is confirmed to commit a fault by the referee committee, his/her reputation will be decreased to the cube root. Recall that all leaders are those nodes with the highest reputation. We believe that the reputation of each leader is larger than 0, including malicious ones. Combining this punishment with \eqref{eq:monotone}, the mapped value, which is closely related to his/her revenue, will reduce to about one-third of the original mapped value. That is, the higher the reputation a leader has, the stronger the punishment he/she will suffer when he/she behaves maliciously.
\section{Future Works}\label{FutureWorks}
In this section, we introduce two skills that may enhance the efficiency of our protocol.

\subsection{Excluding Low-value Transactions through Extra Communication}

According to our protocol, if $l_i^r$ wants to pack transactions that are related to UTXOs managed by $C_j^r$, $l_i^r$ should pack up those transactions while nodes in $C_i^r$ should run Algorithm~\ref{Consensus inside a Committee} to generate a package $PACK$. Then nodes in $C_j^r$ should also run Algorithm~\ref{Consensus inside a Committee} again to confirm if $PACK$ is valid. However, in some situations, most transactions in $PACK$ may be invalid, for example, when the network suffers a Denial-of-Service (DoS) attack. In this case, this interaction process may be a waste of computational resources.

We hope that transactions chosen by $l_i^r$ have a high probability to be accepted by $C_j^r$ to enhance efficiency. One practical way is that leaders can communicate with each other before sending packages. For instance, $l_i^r$ can enquire $l_j^r$ which transactions are valid, and receives a preference directly from $l_j^r$ rather than the agreement of $C_j^r$. This extra step of communication reduces the number of invalid transactions being packaged as long as both leaders trust each other. 
 
We can set up a mechanism to mitigate the possibility that either leader lies. To achieve this, $l_i^r$ can record the response of $l_j^r$ and then generate a packet by Algorithm~\ref{Consensus inside a Committee}, including all transactions mentioned by $l_j^r$. Thus, if $l_j^r$ lies to $l_i^r$ on the validity of a transaction, he/she gets punished, such as a reduction on reputation.

\subsection{Parallelizing Block Generation}

In our protocol, $C_R$ is in charge of proposing a block at the end of a round. This causes a huge connection burden on the referee committee. To boost efficiency, we can have each committee broadcast the block. In detail, after receiving enough authentication on a certain set of transactions from committee participants, a leader can forward the set to $C_R$ for verification. After obtaining permission from the referee committee, the leader can immediately broadcast the block to the whole network. Applying this change to our protocol, we can adapt the schedule proposed by \cite{Omni} in our protocol. We call two transactions are relevant if they follow one of the following properties:
\begin{enumerate}
    \item They use the same UTXO as input;
    \item One transaction spends the output of the other one.
\end{enumerate}

Observe that those irrelevant transactions can be processed in parallel. Thus, a committee can sequentially produce and broadcast blocks within a round. As a result, two transactions satisfying the second property above may both be accepted in the same round. This event never happens in our original protocol as at least one of them will be regarded as illegal. Thus, by using this mechanism, we can enhance the efficiency and reliability of our protocol.
\section{Conclusion}\label{Conclusion}

We present CycLedger, a 1/3-resilient sharding-based distributed ledger protocol with scalability, reliable safety, and incentives.

By splitting nodes into parallel committees, we maximize the utilization of the computational resource, bringing high throughput to our protocol. We enable users to trade safely by introducing semi-commitments among committees and a recovery procedure to detect and evict faulty leaders. By evaluating each validator's behavior explicitly, our protocol has a considerable incentive for nodes to follow the instructions. At the same time, the reputation mechanism helps CycLedger locate those nodes with higher trusty computational power. By assigning them to high-workload positions, we further enhance the efficiency of the protocol. Finally, our analysis demonstrates that CycLedger has a nice performance and can provide striking security.

\bibliographystyle{abbrv} 
\bibliography{References}
\end{document}